\newtheorem{definitionenv}{Definition}
\newtheorem{lemmaenv}[definitionenv]{Lemma}
\newtheorem{theoremenv}[definitionenv]{Theorem}
\newtheorem{corollaryenv}[definitionenv]{Corollary}
\newtheorem{propositionenv}[definitionenv]{Proposition}
\newtheorem{conjectureenv}[definitionenv]{Conjecture}
\newtheorem{exampleenv}{Example}
\newtheorem{app-lemmaenv}[section]{Lemma}
\newenvironment{definition}{\begin{definitionenv}\rm}{\end{definitionenv}}
\newenvironment{lemma}{\begin{lemmaenv}\rm}{\end{lemmaenv}}
\newenvironment{theorem}{\begin{theoremenv}\rm}{\end{theoremenv}}
\newenvironment{corollary}{\begin{corollaryenv}\rm}{\end{corollaryenv}}
\newenvironment{example}{\begin{exampleenv}\rm}{\end{exampleenv}}
\newenvironment{proposition}{\begin{propositionenv}\rm}{\end{propositionenv}}
\newenvironment{conjecture}{\begin{conjectureenv}\rm}{\end{conjectureenv}}
\newenvironment{app-lemma}{\begin{app-lemmaenv}\rm}{\end{app-lemmaenv}}
\newcommand{\bd}{\begin{definition}}
\newcommand{\ed}{\end{definition}}
\newcommand{\bl}{\begin{lemma}}
\newcommand{\el}{\end{lemma}}
\newcommand{\elp}{\hspace*{\fill} $\Box$
                 \end{lemma}}
\newcommand{\bt}{\begin{theorem}}
\newcommand{\et}{\end{theorem}}
\newcommand{\etp}{\hspace*{\fill} $\Box$
                 \end{theorem}}
\newcommand{\bc}{\begin{corollary}}
\newcommand{\ec}{\end{corollary}}
\newcommand{\ecp}{\hspace*{\fill} $\Box$
                 \end{corollary}}
\newcommand{\bcj}{\begin{conjecture}}
\newcommand{\ecj}{\end{conjecture}}
\newcommand{\be}{\begin{example}}
\newcommand{\ee}{\end{example}}
\newcommand{\eep}{\hspace*{\fill} $\Box$
                 \end{example}}
\newcommand{\bp}{\begin{proposition}}
\newcommand{\ep}{\end{proposition}}
\newcommand{\epp}{
                 \end{proposition}}
\newcommand{\bra}[1]{\langle#1|}
\newcommand{\ket}[1]{|#1\rangle}
\newcommand{\Tr}[1]{\text{Tr}\left(#1\right)}
\newcommand{\wt}[1]{\text{wt}\left(#1\right)}
\newcommand{\cC}{{\cal C}}
\newcommand{\cF}{{\cal F}}
\newcommand{\cG}{{\cal G}}
\newcommand{\cH}{{\cal H}}
\newcommand{\cL}{{\cal L}}
\newcommand{\cM}{{\cal M}}
\newcommand{\cQ}{{\cal Q}}
\newcommand{\cS}{{\cal S}}
\newcommand{\cV}{{\cal V}}
\newcommand{\mZ}{{\mathbb Z}}
\newcommand{\odd}{\scriptsize  \mbox{odd}}
\newcommand{\even}{\scriptsize \mbox{even}}
\DeclareMathOperator*{\argmin}{\arg\min}
\begin{document}

\date{}
\title{\Large\bf  Linear Programming Bounds for Entanglement-Assisted Quantum Error-Correcting Codes by Split Weight Enumerators}

\author{Ching-Yi Lai and Alexei Ashikhmin
\thanks{
A part of this work
was presented at the IEEE International Symposium on Information Theory
2017.

CYL is with the Institute of Information Science, Academia Sinica,
No 128, Academia Road, Section 2
Nankang, Taipei 11529, Taiwan.
(email: cylai0616@iis.sinica.edu.tw)

AA is with the Bell Laboratories,
Nokia, 600 Mountain Ave,
Murray Hill, NJ 07974.
(email: alexei.ashikhmin@nokia-bell-labs.com)
}
}

\maketitle

\thispagestyle{empty}

\begin{abstract}

Linear programming approaches have been applied to derive upper bounds on the size of classical and quantum codes.
In this paper, we derive similar results for general quantum codes with entanglement assistance by considering a type of split weight enumerator.
After deriving the MacWilliams identities for these   enumerators, we are able to prove algebraic linear programming bounds, such as the Singleton bound, the Hamming bound, and the first linear programming bound. Our Singleton bound and Hamming bound are more general than the previous bounds for entanglement-assisted quantum stabilizer codes.
In addition, we show that the first linear programming bound improves the Hamming bound when the relative distance is sufficiently large.

 On the other hand,
  we obtain additional constraints on the size of Pauli subgroups for quantum codes, which allow us to improve the linear programming bounds on the minimum distance of quantum codes of small length. In particular, we show that there is no $[[27,15,5]]$ or $[[28,14,6]]$   stabilizer code.
We also discuss the existence of some entanglement-assisted quantum stabilizer codes with maximal entanglement.
As a result, the  upper and lower bounds on the minimum distance of maximal-entanglement quantum stabilizer codes  with length up to $20$
are significantly improved.
\end{abstract}

\section{Introduction}

The theory of quantum error-correcting codes has been developed over two decades
for the purpose of protecting  quantum information from noise in computation or communication~\cite{Shor95,EM96,KL97,Ste96,Got97,CRSS97,CRSS98,NC00,ABKL00a,ABKL00b}.
Classical coding techniques, such as code constructions, encoding and decoding procedures, have been generalized to the quantum case in the literature~\cite{Gaitan08,LB13}.
An important question in coding theory is to determine  how much redundancy is needed so that a certain amount of errors can be tolerated~\cite{MS77}.
The error-correcting capability of a code is usually quantified by the notion of its \emph{minimum distance}, which can be determined by the \emph{weight enumerator} of the code.
Delsarte applied the duality theorem of linear programming to find universal upper bounds on the size of classical codes~\cite{Del73}.
A key step is to use the MacWilliams identities that relate the weight enumerators of a code and its dual code~\cite{MS77}.
These linear programming bounds have been generalized to  the quantum case by Ashikhmin and Litsyn~\cite{AL99},
based on the existence of quantum MacWilliams identities~\cite{SL96,Rains96b}.
We would like to extend these results to the case of quantum codes with entanglement assistance.

Entanglement-assisted (EA) quantum stabilizer codes are a coding scheme with an additional resource---entanglement---shared between the sender and receiver~\cite{BDM06}.
These codes have some advantages over standard stabilizer codes~\cite{HBD08,HYH11,FCVDT10,WHZ14}.
In the scheme of an EA stabilizer code, it is assumed that  the receiver's qubits are error-free, which makes the analysis of error correction slightly more complicated.
Previously, Lai, Brun, and Wilde discussed the MacWilliams identities for the case of EA stabilizer codes \cite{LBW13,LBW10},
which naturally follow from the MacWilliams identities for orthogonal groups~\cite{MS77}.
Two dualities (see Eqs. (\ref{eq:EA stabilizerC_duality1}) and (\ref{eq:EA stabilizerC_duality2}) below) are used to find linear programming bounds on the minimum distance of small EA stabilizer codes.
However, only one duality (see Eq. (\ref{eq:quantum duality}) below) appears in   standard stabilizer codes, and thus we cannot directly apply the method in \cite{AL99} to obtain algebraic linear programming bounds for EA stabilizer codes.
In addition, examples of  \emph{nonadditive} EA quantum codes have recently been found in \cite{SHB11}, which apparently do not fit the EA stabilizer formalism~\cite{BDM06}.
 Consequently, previous upper bounds for EA stabilizer codes, such as the Singleton bound and Hamming bound, do not work for nonadditive EA quantum codes.

In this paper, we will first define general EA quantum codes and discuss their error correction conditions.
Two split weight enumerators  of an EA quantum code are defined accordingly and they are proved to obey a MacWilliams identity, similarly to nonadditive quantum codes \cite{SL96,Rains96b}.
These weight enumerators bear sufficient information about the error correction conditions of the EA code (see Theorem~\ref{thm:EA MI}).
Recently the notion of \emph{data-and-syndrome correction codes} is introduced~\cite{ALB16}, and algebraic linear programming bounds for these codes are derived from a type of weight enumerators over the product space of $\mathbb{F}_4$ and $\mathbb{Z}_2$.
We use similar techniques to derive algebraic linear programming bounds on the size of EA quantum codes of given length, minimum distance, and entanglement,
and obtain the Singleton bound, the Hamming bound, and the first linear programming (LP-1) bound for unrestricted (degenerate or nondegenerate, additive or nonadditive) EA quantum codes.
In addition, we show that the LP-1 bound improves the Hamming bound when the ratio of code distance to code length (relative distance) is sufficiently large.
Also the previous Singleton bound for EA stabilizer codes in \cite{BDM06} does not work for large minimum distance (i.e., $d>\frac{n+2}{2}$; see definitions below).
Recently examples of EA stabilizer codes that violate  the Singleton bound have been constructed~\cite{Gra17}. We will provide a refined Singleton bound for the general case.

In the case of EA stabilizer codes, the MacWilliams identities for split weight enumerators provide more constraints in the linear program of an EA stabilizer code than those in \cite{LBW13}, and thus the resulting upper bound on the minimum distance could be potentially tighter for EA stabilizer codes of small length.
Other than that, we also find \emph{additional constraints} on the weight distributions of EA stabilizer codes.
Rains introduced the notion of \emph{shadow} enumerator for additive quantum codes. In the case of a quantum stabilizer code,
a type of MacWilliams identity between the weight enumerators of a stabilizer  group and its shadow (see Eq.~(\ref{eq:MI_shadow})) provides additional constraints on the weight distribution of its stabilizer group and hence
the linear programming bound  can be improved~\cite{Rains96}.
However, this method cannot be applied to non-Abelian groups, such as the normalizer group of a stabilizer code or the \emph{simplified} stabilizer group of an EA stabilizer code.
 We will derive additional constraints on the weight enumerators of non-Abelian Pauli groups and improve the linear programming bounds on the minimum distance of small EA stabilizer codes. In particular, this helps to exclude the existence of $[[27,15,5]]$ and $[[28,14,6]]$   quantum  codes.
As for EA stabilizer codes, the improved linear programming bounds rule out the existence of several EA stabilizer codes.
We also prove the nonexistence of certain codes and construct several EA stabilizer codes with previously unknown parameters.
To sum up, the table of upper and lower bounds on the minimum distance of \emph{maximal entanglement} EA stabilizer codes of length up to $20$ given in~\cite{LBW10} (with lower bounds improved in \cite{LLGF15}) is greatly improved in Table~\ref{tb:Bounds}.

This paper is organized as follows. Preliminaries are given in the next section.
In Sec.~\ref{sec:nonadditive codes} we discuss general EA quantum codes and their properties, including two split weight enumerators and their MacWilliams identities.
In particular, we prove a Gilbert-Varshamov type lower bound for the case of EA stabilizer codes.
Then we derive algebraic linear programming bounds in Sec.~\ref{sec:LP bounds}, including the Singleton-type, Hamming-type, and the first-linear-programming-type  bounds.
We will compare the Hamming bound and the first linear programming bound in Subsec.~\ref{sec:LP1}.
The linear programming bounds for small EA stabilizer codes are discussed in Sec.~\ref{sec:small codes}, including
the additional constraints,
and nonexistence and existence of certain EA stabilizer codes.
Then the discussion section follows.

\section{Preliminaries}

In this section we give notation and briefly introduce Pauli operators, quantum error-correcting codes, the MacWilliams identities for orthogonal groups, and properties of the Krawtchouk polynomials.

\subsection{Pauli Operators}
A single-qubit state space is a two-dimensional complex Hilbert space $\mathbb{C}^2$,
and a multiple-qubit state space is simply the tensor product space of single-qubit spaces.
The Pauli matrices
$$I_2=\begin{bmatrix}1 &0\\0&1\end{bmatrix}, X=\begin{bmatrix}0 &1\\1&0\end{bmatrix},  Z=\begin{bmatrix}1 &0\\0&-1\end{bmatrix}, Y=iXZ$$
form a basis of the linear operators on a single-qubit state space.
Let  $$\overline{\cG}_n=\{M_1\otimes M_2\otimes \cdots \otimes M_n: M_j\in\{I_2,X,Y,Z\} \},$$ which is a basis of the linear operators on the $n$-qubit state space $\mathbb{C}^{2^n}$.
Let $$\cG_n= \{i^eg: g\in \overline{\cG}_n,e\in\{0,1,2,3\}\}$$ be the $n$-fold Pauli group.
The weight of $g=i^e M_1\otimes M_2\otimes \cdots \otimes M_n \in {\cG}_n$ is the number of  $M_j$'s that are nonidentity matrices and \textcolor[rgb]{0.0,0.00,0.00}{is denoted by $\wt{g}$}.
Note that all elements in $\bar{\cG}_n$ have only eigenvalues $\pm 1$ and they either commute or anticommute with each other.

Since the overall phase of a quantum state is not important,
it suffices to consider error operators in $\overline{\cG}_n$.
For a subgroup $\cV\subset \cG_n$, we define
\begin{align}
\overline{\cV}=\{g\in \overline{\cG}_n:  i^e g\in \cV \text{ for } e\in\{0,1,2,3\}\}. \label{eq:quotient_group}
\end{align}
That is,  $\overline{\cV}$ is the collection of the elements in $\cV$ with coefficient $+1$ instead.
Sometimes it is convenient to consider the quantum coding problem in terms of binary strings~\cite{CRSS97}.
For two binary $n$-tuples $u,v\in \mathbb{Z}_2^n$, define
 \[
 Z^{u}X^{v}= \bigotimes_{j=1}^n Z^{[u]_j}X^{[v]_j},
 \]
 where $[u]_j$ denotes the $j$-th entry of $u$.
Thus any element $g\in {\mathcal{G}}_n$ can be expressed as $g= i^e Z^{u}X^{v}$ for some $ e \in \{0,1,2,3 \}$ and $u,v\in \mathbb{Z}_2^n$.
 For example, we may denote $X\otimes Y\otimes Z$ by $Z^{011}X^{110}$ up to some phase.
 Let \textcolor[rgb]{0.00,0.00,0.00}{$\mathbb{I}$ denote the identity operator of appropriate dimensions.}
  We may also use the notation $X_j$ to denote the operator $I_2\otimes \cdots\otimes I_2\otimes X\otimes I_2\otimes \cdots\otimes I_2$ (of appropriate dimensions)
 with an $X$ on the $j$-th qubit and identities on the others, and similarly for $Z_j$ and $Y_j$.
We  define a homomorphism $\tau:{\cG}_n\rightarrow\mZ_2^{2n}$ by
\begin{align}
    i^e Z^uX^v\mapsto  (u , v). \label{eq:tau}
\end{align}
Define an inner product in ${\cG}_n$ by
\begin{align*}
    \langle g,h\rangle_{{\cG}_n} =&\sum_{i=1}^n ([u_1]_i[v_2]_i+[v_1]_i[u_2]_i),
\end{align*}
where $g=i^{e}Z^{u_1}X^{v_1}, h=i^{e'}Z^{u_2}X^{v_2}\in{\cG}_n$ and the addition is considered in $\mathbb{Z}_2$.
Then $\langle g,h\rangle_{{\cG}_n}=0$ if they commute,
 and  $\langle g,h\rangle_{{\cG}_n}=1$, otherwise.

\subsection{Quantum Error-Correcting Codes}

An $((n,M,d))$ quantum code $\cQ$ of length $n$ and size $M$ is an $M$-dimensional subspace of  the $n$-qubit state space $\mathbb{C}^{2^n}$,
such that any error $E\in \overline{\cG}_n$ of $\wt{E}\leq d-1$ is \emph{detectable}.
The parameter $d$ is called the minimum distance of $\cQ$.
From the error correction conditions~\cite{BDSW96,KL97},
$E$ is detectable if and only if $\bra{v}E\ket{w}=0$ for orthogonal codewords $\ket{v}, \ket{w}\in \cQ$.

Suppose $\cS=\langle g_1, \dots, g_{n-k}\rangle$ is an Abelian subgroup of ${\cG}_n$, where $g_j$ are independent generators of $\cS$, such that the minus identity $-\mathbb{I}\notin \cS$.
Then $\cS$ defines a quantum \emph{stabilizer} code 
$$\cC(\cS)=\{\ket{\psi}\in \mathbb{C}^{2^n}: g\ket{\psi}=\ket{\psi}, \forall g\in \cS\}$$
of dimension $2^k$.
The vectors $\ket{\psi}\in \cC(\cS)$ are called the codewords of $\cC(\cS)$ and the operators $g\in\cS$  are called the stabilizers of $\cC(\cS)$.
Quantum stabilizer codes are  
analogues of classical \emph{additive} codes. As opposed to stabilizer codes, others are called \emph{nonadditive} quantum codes \cite{RHSS97,Rus00,SSW07,YCLO07,CSSZ09,Ouy14,LB13}.

Suppose an error $E\in \overline{\cG}_n$ occurs on a codeword $\ket{\psi}\in \cC(\cS)$.
If $E$ anticommutes with some $g_j$'s, it can be detected by measuring the eigenvalues of $g_j$'s without disturbing the state of $E\ket{\psi}$.
Let
\begin{align}
\cS^{\perp}= \{h\in {\cG}_n: \langle h,g\rangle_{{\cG}_n}=0, \ \forall g\in \cS \}, \label{eq:Sperp}
\end{align}
which is the normalizer group  of $\cS$ in ${\cG}_n$.
It is clear that for $E\in \overline{\cS^{\perp}}$, $E$ cannot be detected. Thus the minimum distance $d$ of $\cC(\cS)$ is defined as the minimum weight of any element in
\begin{align}\overline{\cS^{\perp}}\setminus \overline{\cS},\label{eq:quantum duality}
\end{align}
since  errors in $\cS$ are not harmful.  Then $\cC(\cS)$ is called an $[[n,k,d]]$ quantum stabilizer code.
If there exists $g\in \cS$ with $\wt{g}< d$, $\cC(\cS)$ is called \emph{degenerate}; otherwise, it is \emph{nondegenerate}.

The weight enumerator of a group $\cS$ is
\[
W_{\cS}(x,y)=\sum_{w=0}^n B_w x^{n-w}y^w,
\]
where $B_w$ is the number of elements of weight $w$ in $\cS$.
We may simply say that $\{B_w\}$ is the \emph{weight distribution} of $\cS$.
The weight enumerators of an \emph{additive} group $\cV$ and its orthogonal group  $\cV^{\perp}$ are related by the MacWilliams identities~\cite{MS77,LHL14}:
\begin{equation}
W_{{\cV^{\perp}}}(x,y)=\frac{1}{|\cV|}W_\cV(x+3y,x-y), \label{cor:MI_Hamming}
\end{equation}

Thus we have the MacWilliams identities for stabilizer codes.\footnote{More precisely, when we apply the MacWilliams identities for orthogonal groups, we are considering the stabilizer subgroup defined in the quotient of the $n$-fold Pauli group by its center: $\cG_n/\{\pm \mathbb{I},\pm i\mathbb{I}\}$ as in~\cite{LBW13}. Or one can consider the group $\tau(\cS)$ and its orthogonal group in $\mathbb{Z}_2^{2n}$ instead.}
Note that the MacWilliams identities  for nonadditive quantum codes also exist~\cite{SL96,Rains96b}.
As a consequence, linear programming techniques can be applied to find upper bounds on the minimum distance of small quantum stabilizer codes  \cite{CRSS98,LBW13}
or to derive Delsarte's algebraic upper bounds on the dimension of general quantum codes~\cite{AL99}.

\subsection{Krawtchouk Polynomials}
The $i$-th   quaternary Krawtchouk polynomial is defined as
\begin{align}
K_i(x;n)= \sum_{j=0}^i (-1)^j 3^{i-j}{x\choose j}{n-x \choose i-j}\label{kraw},
\end{align}
which satisfies
\begin{align*}
(1-y)^x(1+3y)^{n-x}= \sum_{i=0}^n K_{i}(x;n) y^i.
\end{align*}
The Krawtchouk polynomials satisfy the following orthogonality relation
\begin{equation}
\label{orth}
\sum_{i=0}^n K_r(i;n)K_i(s;n)=4^n\delta_{r,s}.
\end{equation}
where $\delta_{{  r},{  s}}$ is the Kronecker delta function.
Thus they form a basis of polynomials with finite degree $n$.
They are especially useful in the MacWilliams theory and algebraic linear programming bounds.
Details of Krawtchouk polynomials can be found in \cite{MS77,Lev95,AL99}. 
%
Here we survey some properties of the quaternary Krawtchouk polynomials.
For convenience, sometimes we may simply write $K_i(x)=K_i(x;n)$  when the underlying $n$ is clear from the context.

Let $\{B_w\}$ and  $\{B_w^{\perp}\}$ be  the weight distributions of a stabilizer group $\cS$ and its orthogonal subset $\overline{\cS^{\perp}}$ in $\overline{\cG}_n$, respectively. Thus  Eqs. (\ref{cor:MI_Hamming}) and (\ref{kraw}) imply
\[
B_i^{\perp}= \frac{1}{|\cS|}\sum_{j=0}^n B_{j} K_{i}(j;n), \ i=0,\dots,n.
\]

A symmetry relation is satisfied by the polynomials
\begin{equation}
\label{sym}
3^i{n \choose i}K_s(i;n)=3^s{n \choose s}K_i(s;n).
\end{equation}
The following property is needed in the proof of Singleton bound later (Sec.~\ref{sec:singleton})
\begin{align}
\sum_{i=0}^n {n-i\choose n-j}K_i(x;n)= 4^j{n-x\choose j}. \label{eq:KP1}
\end{align}

Every polynomial of degree at most $n$ has a unique expansion in the
basis of Krawtchouk polynomials.
If a polynomial $f(x)$ has the expansion
$$f(x)=\sum_{i=0}^t f_i K_i(x) ,$$
then
$$f_i=4^{-n} \sum_{j=0}^n f(j) K_j(i) .$$


The Christoffel-Darboux formula
is of
importance  \cite{Lev95}:
 \begin{equation}\label{ch-dar}
 K_{t+1}(x)K_{t}(a)-K_{t}(x)K_{t+1}(a)
=\frac{4(a-x)} {t+1}3^t{n \choose t}\sum_{i=0}^t\frac{K_i(x)K_i(a)}{3^i{n \choose i }}.
\end{equation}

The Krawtchouk polynomials satisfy a recurrence relation,
\begin{equation}
\label{rec}
(i+1) K_{i+1}(x) = (3n-2i-4x) K_i(x) - 3(n-i+1) K_{i-1}(x).
\end{equation}

The following equation is derived in~\cite[Lemma 2]{AL99}
\begin{align}
K_i(x)K_j(x)&=\sum_{l=0}^n K_l(x) \sum_{r=0}^{n-l} \alpha(l,i,j,r), \label{eq:pipj}
\end{align}
where
\begin{align}
\alpha(l,i,j,r)=&{l\choose 2l+2r-i-j}{n-l\choose r}{2l+2r-i-j \choose l+r-j}
2^{i+j-2r-l}3^r. \label{eq:alpha}
\end{align}

\textcolor[rgb]{0,0.00,0.00}{
Denote by $r_t$ the smallest root of $K_t(x)$. It is well known that $r_t>r_{t+1}$~\cite{MS77} and that when $t$ grows linearly with $n$ we have
$$
{r_t\over n} ={3\over 4} -{t\over 2n} -{1\over 2} \sqrt{3{t\over n} (n-{t\over n})}  +o(1).
$$
Define
\begin{equation}\label{eq:f(x)}
f(x)=\frac{1}{a-x}
\left\{ K_{t+1}(x)K_{t}(a)-K_{t}(x)K_{t+1}(a)\right\}^2.
\end{equation}
This polynomial yields the first linear
programming bound for classical codes  over $\mathbb{F}_4$ \cite{Aal77},\cite{Lev95}.
To get the first linear programming bound in the asymptotic case, for large $n$,
we choose
$$\frac{t}{n}=\frac{3}{4}-\frac{1}{2}\delta-\frac{1}{2}\sqrt{3\delta(1-\delta)}+o(1),$$
where $\delta=d/n$,  and $a$ so that
$r_{t+1}<a <r_{t}$ and $\frac{K_t(a)}{K_{t+1}(a)}=-1$.}

The Krawtchouk polynomials  also form a basis for multivariate polynomials. In our case, we will need the basis of bivariate polynomials of degrees at most $n,c$ in $x,y$, respectively:
\[
\{ K_i(x;n)K_j(y;c)  \}.
\]
Then
 a polynomial $f(x,y)$ of degree at most $n,c$ in $x,y$ has a unique \emph{Krawtchouk expansion}
\[
f(x,y)= \sum_{i=0}^{n}\sum_{j=0}^{c} f_{i,j}K_i(x;n)K_j(y;c) ,
\]
where
\begin{align}
f_{i,j} =&\frac{1}{4^{n+c}}\sum_{u=0}^n \sum_{v=0}^{c}f(u,v)  K_u(i;n)K_v(j;c) . \label{eq:expansion}
\end{align}

\section{Entanglement-Assisted Quantum Codes}\label{sec:nonadditive codes}

In the following we  consider Pauli operators of the form
$$ E^{A}\otimes F^{B}\in\overline{\cG}_{n+c}$$
for $E^A\in \overline{\cG}_n$, $F^B\in \overline{\cG}_c$,
where the superscripts $A$ and $B$ denote two parties Alice and Bob, respectively. We may implicitly write $E\otimes F$ for simplicity when it is clear from the context.
We will define two \emph{split weight enumerators}  for EA quantum codes that count weights on Alice and Bob's qubits separately,
and then derive their  MacWilliams identities, which is a key step to prove  algebraic linear programming bounds for general EA quantum codes.

Assume Alice and Bob share $c$ pairs of maximally-entangled states $(\ket{00}+\ket{11})/\sqrt{2}$, called \emph{ebits}.
In addition, Alice has other $n-c$ qubits in the state $\ket{0}$.
Then Alice will encode information in her qubits (a total of $n$ qubits) and send them to  Bob through a noisy channel.  Bob's $c$ qubits are assumed to be error-free during the whole process.
Since the qubits of Alice and Bob are entangled, the encoded quantum state lies in an $(n+c)$-qubit state space. (Details of the encoding procedure can be found in~\cite{SHB11}.)
Thus we define an EA quantum code as follows.
\bd \label{def:EAQC}
An $((n,M,d;c))$ EA quantum code $\cQ$ is an $M$-dimensional subspace of  the $(n+c)$-qubit state space $\mathbb{C}^{2^{n+c}}$ such that\\
1) for $\ket{\psi}\in\cQ$,  $\text{Tr}_{A}\left(\ket{\psi}\bra{\psi}\right)=\frac{1}{2^c}\mathbb{I}^B$;\\
2) for  $E^A\in \overline{\cG}_n$ of $\wt{E^A}\leq d-1$,   $E'= E^A\otimes \mathbb{I}^B$ is detectable.
\ed
\noindent
The first condition in Def.~\ref{def:EAQC} ensures that Alice and Bob share $c$ ebits and that the encoding is locally performed by Alice.
The parameter $d$ is called the minimum distance of $\cQ$, which quantifies the maximum weight of a detectable Pauli error.
Denote by $R=\frac{\log_2 M}{n}$ the code rate, by $\delta=\frac{d}{n}$ the relative distance, and by $\rho=\frac{c}{n}$ the entanglement-assistance rate.
If $\cQ$ is defined by a stabilizer group (of ${\cG}_{n+c}$) of size $2^{n+c-k}$, we have $M=2^k$ and $\cQ$ is called an $[[n,k,d;c]]$ EA stabilizer code~\cite{BDM06}.
 If, furthermore, $c=n-k$, $\cQ$ is called a \emph{maximal-entanglement} EA stabilizer code.

Suppose $\cQ$ has an orthonormal basis $\{\ket{\psi_i}\}$ and $P=\sum_i \ket{\psi_i}\bra{\psi_i}$ is the orthogonal projector onto $\cQ$.  The quantum error correction conditions~\cite{BDSW96,KL97}
say that an error operator $E\in \overline{\cG}_{n+c}$ is detectable by $\cQ$ if and only if
\[
\bra{\psi_i}E\ket{\psi_j}=\lambda_E \delta_{i,j}
\]
or
$PEP=\lambda_E P$
for some constant $\lambda_E$ depending on $E$. \label{lemma:error correction condition}
An error operator $E$ is called a \emph{degenerate} error if $\lambda_E\neq 0$; otherwise, it is \emph{nondegenerate}.

Similarly to the case of quantum codes~\cite{SL96}, we define two split weight enumerators $\{B_{i,j}\}$ and $\{B_{i,j}^{\perp}\}$ of  $\cQ$  by
\begin{align}
&B_{i,j}=\frac{1}{M^2}\sum_{ E_i\in \overline{\cG}_n, \wt{E_i}=i\atop E_j\in \overline{\cG}_c, \wt{E_j}=j  } (\Tr{(E_i\otimes E_j) P})^2,\  {i=0,\dots,n;j=0,\dots,c;} \label{eq:Bij}\\
&B^{\perp}_{i,j}=\frac{1}{M}\sum_{ E_i\in \overline{\cG}_n, \wt{E_i}=i\atop E_j\in \overline{\cG}_c, \wt{E_j}=j  } \Tr{(E_i\otimes E_j) P(E_i\otimes E_j) P},\  {i=0,\dots,n; j=0,\dots,c.}\label{eq:Bij_perp}
\end{align}
Since $P=\sum_i \ket{\psi_i}\bra{\psi_i}$, we can rewrite $B_{i,j}$ and $B_{i,j}^{\perp}$ as
\begin{align*}
&B_{i,j}=\frac{1}{M^2}\sum_{ E_i\in \overline{\cG}_n, \wt{E_i}=i\atop E_j\in \overline{\cG}_c, \wt{E_j}=j  } |\sum_{l} \bra{\psi_l}E_i\otimes E_j\ket{\psi_l}|^2, \\
&B^{\perp}_{i,j}=\frac{1}{M}\sum_{ E_i\in \overline{\cG}_n, \wt{E_i}=i\atop E_j\in \overline{\cG}_c, \wt{E_j}=j  } \sum_{l,m}| \bra{\psi_l}E_i\otimes E_j\ket{\psi_m}|^2.
\end{align*}
Thus $\{B_{i,j}\}$ is a distribution of errors that do not corrupt the code space of $\cQ$, while $\{B_{i,j}^{\perp}\}$ is a distribution of errors that are undetectable by $\cQ$.
\bt \label{thm:EA MI}
Suppose $\cQ$ is an $((n,M,d;c))$  EA quantum code with  projector $P$ and split weight enumerators
$\{B_{i,j}\}$ and $\{B_{i,j}^{\perp}\}$, defined in (\ref{eq:Bij}) and (\ref{eq:Bij_perp}), respectively.
Then
\begin{enumerate}[1)]
  \item $B_{0,0}=B_{0,0}^{\perp}=1$; $B_{i,j}^{\perp}\geq B_{i,j}\geq 0$;
\item $B_{0,j}=0$ for $j=1,\dots,c$;
  \item
  \begin{align}
  \left\{
    \begin{array}{ll}
      B_{i,0}=B_{i,0}^{\perp}, & \hbox{ $i=1,\dots,d-1$;} \\
      B_{d,0}<B_{d,0}^{\perp}; &
    \end{array}
  \right.
\label{eq:thm2-2}
  \end{align}
  \item

\begin{equation}
B_{i,j}^{\perp}={M\over {2^{n+c}}}\sum_{u=0}^n\sum_{v=0}^{c}B_{u,v} K_i(u;n) K_j(v;c); \label{cor:MI_3a}
\end{equation}
\begin{equation}
B_{i,j}={1\over {2^{n+c}M}}\sum_{u=0}^n\sum_{v=0}^{c}B_{u,v}^{\perp} K_i(u;n) K_j(v;c). \label{cor:MI_3}
\end{equation}
\end{enumerate}
\et
\begin{proof}
1) The first part is straightforward and the second part is from the Cauchy-Schwartz inequality.

2) By Definition~\ref{def:EAQC}, we have $\text{Tr}_A(P)=\frac{M}{2^c}\mathbb{I}^B$.
Then for $j>0$,
\begin{align*}
B_{0,j}=&\frac{1}{M^2}\sum_{  E_j\in \overline{\cG}_c, \wt{E_j}=j  } (\Tr{(\mathbb{I}^A\otimes E_j) P})^2\\
=&\frac{1}{M^2}\sum_{  E_j\in \overline{\cG}_c, \wt{E_j}=j  } \left(\Tr{E_j \text{Tr}_A(P)}\right)^2=0,
\end{align*}
since $\Tr{E}=0$ for a nonidentity Pauli operator $E$.

3)
We have $B_{i,0}=B_{i,0}^{\perp}$ if and only if $\bra{\psi_l}E^A\otimes \mathbb{I}^{B}\ket{\psi_m}=0 $ for $E\in\overline{\cG}_n$ with $\wt{E}=i$ and for all $l\neq m$.

4)
The proof is similar to that in \cite{SL96}.
The projector $P$  can be expressed as
\[
P= \sum_{u=0}^n\sum_{v=0}^c \sum_{D_u\in \overline{\cG}_n, \wt{D_u}=u\atop D_v\in \overline{\cG}_c, \wt{D_v}=v} \frac{\Tr{(D_u\otimes D_v) P}}{2^{n+c}} D_u\otimes D_v,
\]
since $\{D_u\otimes D_v\}_{u,v}$ is a basis of linear operators on $\mathbb{C}^{2^{n+c}}$.
For convenience, we will simply write $D_u\otimes D_v$ as the index of summation and similarly for $E_i\otimes E_j$.
By definition,
\begin{align*}
B^{\perp}_{i,j}
=&\frac{1}{M}\sum_{ E_i\otimes E_j }\sum_{u,v\atop u',v'}\sum_{D_u\otimes D_v\atop D_{u'}\otimes D_{v'}}\frac{\Tr{(D_u\otimes D_v )P}\Tr{(D_{u'}\otimes D_{v'} )P}}{2^{2(n+c)}}
\cdot \Tr{(E_i\otimes E_j)(D_u\otimes D_v) (E_i\otimes E_j) (D_{u'}\otimes D_{v'})}\\
\stackrel{(a)}{=}&\frac{1}{M}\sum_{ E_i\otimes E_j }\sum_{u,v}\sum_{D_u\otimes D_v}\frac{(\Tr{(D_u\otimes D_v )P})^2 }{2^{2(n+c)}}
\cdot \Tr{(E_i\otimes E_j)(D_u\otimes D_v) (E_i\otimes E_j) (D_{u'}\otimes D_{v'})}\\
\stackrel{}{=}&\frac{M}{2^{n+c}}\sum_{u,v}\sum_{D_u\otimes D_v}\frac{(\Tr{(D_u\otimes D_v )P})^2 }{M^2}
\cdot\sum_{ E_i } \frac{\Tr{E_iD_uE_iD_u}}{2^n}\sum_{E_j }\frac{\Tr{E_jD_vE_jD_{v}}}{2^c}\\
\stackrel{(b)}{=}&\frac{M}{2^{n+c}}\sum_{u,v} B_{u,v} K_{i}(u;n)K_{j}(v;c),
\end{align*}
where $(a)$ is because the trace of a nonidentity Pauli operator is zero.
To prove $(b)$, note that $\Tr{E_iD_u E_iD_u}/2^n=  1$ if $E_i$  and $D_u$ commute; and
$\Tr{E_iD_u E_iD_u}/2^n=  1$, otherwise. The rest is simply to determine the number of $E_i$
that commute   with $D_u$.

Eq.~(\ref{cor:MI_3}) is obtained by applying (\ref{orth}) twice to (\ref{cor:MI_3a}).
\end{proof}
\bd
An $((n,M,d;c))$ EA quantum code is called \emph{degenerate} if its split weight enumerator $B_{i,0}>0$ for some $0<i<d$, and \emph{nondegenerate}, otherwise.
\ed

\textbf{Remark:}
The converse of Theorem~\ref{thm:EA MI} is not necessarily true;  given two distributions $\{B_{i,j}\}$ and $\{B_{i,j}^{\perp}\}$ satisfying the conditions 1), 2), 3), and 4) in Theorem~\ref{thm:EA MI},
it may still be the case that no corresponding EA quantum code exists.
In particular, we will derive additional constraints for the case of  \emph{EA stabilizer codes} in Theorem~\ref{thm:new constraints} later.

Usually the existence of a code is shown by construction. The Gilbert-Varshamov bound provides a nonconstructive proof of the existence of classical codes~\cite{MS77}.
 Next we will prove a Gilbert-Varshamov-type bound on the size of \emph{EA stabilizer codes}\footnote{A similar result has been provided in \cite{LBW10}, but the argument there was incomplete.} for fixed $n,d,c$ in Theorem~\ref{thm:EA GV bound}.
We start by defining three  Pauli subgroups associated with an EA stabilizer code.

Let $\cS'$ be an Abelian subgroup of ${\cG}_{n+c}$  generated by
\begin{align*}
&g_1'=g_{1}^A\otimes Z_1^B, \dots, g_c'=g_{c}^A\otimes Z_c^B,\\
&h_1'=h_{1}^A\otimes X_1^B, \dots, h_c'=h_{c}^A\otimes X_c^B,\\
&g_{c+1}'=g_{c+1}^A\otimes \mathbb{I}^B, \dots, g_{n-k}'=g_{n-k}^B\otimes \mathbb{I}^B,
\end{align*}
where $Z_j^B, X_j^B\in {\cG}_c$, $g_j^A, h_j^A \in {\cG}_n$, $0\leq c\leq n-k$, and
 $g_j$ and $h_j$  satisfy the commutation relations:
\begin{align}
&\langle g_i,g_j\rangle_{{\cG}_n} =0\mbox{ for $i\neq j$}, \label{eq:commutation_1} \\
&\langle h_i,h_j\rangle_{{\cG}_n} =0\mbox{ for $i\neq j$}, \label{eq:commutation_2} \\
&\langle g_i,h_j\rangle_{{\cG}_n} =0 \mbox{ for $i\neq j$},  \label{eq:commutation_3}\\
&\langle g_i,h_i\rangle_{{\cG}_n} =1 \mbox{ for all $i$}.   \label{eq:commutation_4}
\end{align}
(We say that $g_i$ and $h_i$ are  \emph{symplectic partners}.)
Then $\cS'$ defines an $[[n,k,d;c]]$ EA stabilizer code $\cC(\cS')$ by
$$\{\ket{\psi}\in \mathbb{C}^{2^{n+c}}: g\ket{\psi}=\ket{\psi}, \forall g\in \cS'\}$$
of dimension $2^k$.
We can introduce another $2k$ independent generators  $g_{n-k+1}^A\otimes \mathbb{I}^B, \dots, g_n^A\otimes \mathbb{I}^B,
h_{n-k+1}^A\otimes \mathbb{I}^B, \dots, h_n^A\otimes \mathbb{I}^B\in {\cG}_{n+c}$, where $g_j$ and $h_j$ also satisfy the commutation relations.
Since these $2k$ generators commute with the stabilizers in $\cS'$, they operate on the logical level of the encoded states.
Thus we define a \emph{logical} subgroup
$\cL=\langle g_{n-k+1}, \dots, g_n, h_{n-k+1}, \dots, h_n\rangle.$
Let $\cS_{\text{I}}= \langle g_{c+1}, \dots, g_{n-k}\rangle\subset {\cG}_n$ be the \emph{isotropic} subgroup, which is Abelian.
Let $\cS_{\text{S}}= \langle g_{1}, h_{1}, \dots, g_{c}, h_{c}\rangle \subset {\cG}_n$  be the \emph{symplectic} subgroup, which is non-Abelian.
Since $g_i h_i= -h_i g_i$, $- \mathbb{I}\in\cS_{\text{S}}$. 
Let $\cS=\{g_1 g_2: g_1\in {\cS}_{\text{S}}, g_2\in \cS_{\text{I}}\}$  be the \emph{simplified} stabilizer group, which is non-Abelian.
Since the elements in $\cS_{\text{S}}$ commute with the elements in $\cS_{\text{I}}$, we can safely denote $\cS$  by the notation $\cS_{\text{S}}\times \cS_{\text{I}}$ and similarly in the following.
Then  the minimum distance $d$ of $\cC(\cS')$ is the minimum weight of any element in
\begin{align}
\overline{\mathcal{S}^{\perp}}\setminus \overline{\mathcal{S}_{\text{I}}}. \label{eq:EA stabilizerC_distance}
\end{align}

Now we prove a Gilbert-Varshamov bound for EA stabilizer codes.
\bt \label{thm:EA GV bound}
There exists an $[[n,k,d;c]]$ EA stabilizer code
provided that
\[
\left(2^{n+k-c}-2^{n-k-c}\right) \sum_{j=0}^{d-1}{n\choose j}3^j \leq 4^{n}-1.
\]
As $n$ becomes large, we have
\[
R= 1+\rho- \delta \log_2 3 -H_2(\delta),
\]
where $H_2(x)=-x\log_2x -(1-x)\log_2(1-x)$ is the binary entropy function. 

\et
\begin{proof}
The first part is similar to the proof of the Gilbert-Varshamov bound for quantum stabilizer codes~\cite{CRSS97,AKS06a}.
Let $N_k$ be the number of all $[[n,k;c]]$ EA stabilizer codes.
Let $\cM$ denote the multiset
\[
\bigcup_{\cS }\overline{\mathcal{S}^{\perp}}\setminus \overline{\mathcal{S}_{\text{I}}},
\]
where the union is over all simplified stabilizer subgroups  $\cS= \cS_{\text{S}}\times \cS_{\text{I}}$ of $[[n,k;c]]$ EA stabilizer codes.
For each $\cS$, there are $\left(2^{n+k-c}-2^{n-k-c}\right)$ elements in $\overline{\mathcal{S}^{\perp}}\setminus \overline{\mathcal{S}_{\text{I}}}$.
Recall that the $n$-fold Clifford group is the set of unitary operators that preserve the $n$-fold Pauli group $\cG_n$ by conjugation.
It is known that the $n$-fold Clifford group is transitive on ${G}_n\setminus \{\mathbb{I}\}$. That is, for $E,F\in {G}_n\setminus \{\mathbb{I}\}$,
there exists an  $n$-fold  Clifford operator $U$ such that $UEU^{\dag}=F$.
If $E\in \cS$, then $F\in U\cS U^{\dag}$,
where $U\cS U^{\dag}\triangleq\{UgU^{\dag}:g\in \cS  \}$.
Thus each nonidentity element   $E\in \overline{G}_n$ appears
\[
N_k\cdot \frac{2^{n+k-c}-2^{n-k-c}}{4^n-1}
\]
times in $\cM$. Now we delete from $\cM$ those $\overline{\mathcal{S}^{\perp}}\setminus \overline{\mathcal{S}_{\text{I}}}$ with at least one nonidentity element of weight less than $d$.
At most
\[
\sum_{j=0}^{d-1}{n\choose j}3^j\cdot N\cdot \frac{2^{n+k-c}-2^{n-k-c}}{4^n-1}
\]
$\cS$ are removed from the union of $\cM$.  If this number is smaller than $N_k$,  there exists an $[[n,k,d;c]]$ EA stabilizer code.

 The second part is from
 Stirling's approximation:
\begin{align}
\frac{1}{n}\log_2{n\choose d} =H_2\left(\frac{d}{n}\right)+o(1), \label{eq:stirling}
\end{align}
where $o(1)$ tends to $0$ as $n$ increases.
\end{proof}
The Gilbert-Varshamov bound for EA stabilizer codes suggests that entanglement-assisted quantum codes may have higher code dimension  than codes without entanglement assistance for fixed $n$ and $d$.

\section{Upper Bounds for EA Quantum Codes} \label{sec:LP bounds}
In this section, we will derive Delsarte's algebraic linear programming bounds for \emph{general} quantum codes with {entanglement} assistance. 
We first derive the main theorem, similar to \cite[Theorem 4]{AL99} and \cite[Theorem 3]{ALB16}.
\begin{theorem}\label{thm:gen_upper_bound}

Let $\cQ$ be an $((n, M, d;c))$ EA quantum code. Let
\[
f(x,y)= \sum_{i=0}^{n}\sum_{j=0}^{c} f_{i,j}K_i(x;n)K_j(y;c)
\]
be a polynomial with nonnegative coefficients $\{f_{i,j}\}$ in the Krawtchouk expansion~(\ref{eq:expansion}). Assume that
\begin{align}
f_{i,0}> 0, &\mbox{ for $0\leq i\leq d-1$};\label{eq:polin_cond2}\\
f(x,y)\le 0, &\mbox{ for $x\ge d$ or $y\geq 1$}.  \label{eq:polin_cond1}
\end{align}
Then
\begin{equation*}\label{eq:general_bound}
M\le\frac{1}{2^{n+c}} \max_{0\le l\le d-1} {f(l,0)\over f_{l,0}}.
\end{equation*}
If $\cQ$ is nondegenerate, then
\begin{equation}\label{eq:nondegenerate general_bound}
M\le\frac{1}{2^{n+c}}  {f(0,0)\over f_{0,0}}.
\end{equation}
\end{theorem}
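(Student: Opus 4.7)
The plan is a double-counting argument in the style of \cite[Theorem 4]{AL99} and \cite[Theorem 3]{ALB16}, driven by the MacWilliams identity (\ref{cor:MI_3}) from Theorem~\ref{thm:EA MI}. Define
\[
S = \sum_{u=0}^{n}\sum_{v=0}^{c} B_{u,v}^{\perp}\, f(u,v).
\]
Substituting the Krawtchouk expansion of $f$, interchanging the order of summation, and applying (\ref{cor:MI_3}) yields the ``bridge'' identity
\[
S \;=\; 2^{n+c}\, M \sum_{i=0}^{n}\sum_{j=0}^{c} f_{i,j}\, B_{i,j},
\]
which crucially places $M$ as a multiplier rather than a divisor. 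The goal is now to bound $S$ above, and $\sum f_{i,j} B_{i,j}$ below, by comparable quantities so that $M$ can be extracted.

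For the upper bound on $S$, Theorem~\ref{thm:EA MI}(1) gives $B_{u,v}^{\perp}\geq 0$, and hypothesis (\ref{eq:polin_cond1}) gives $f(u,v)\leq 0$ for $u\geq d$ or $v\geq 1$, so every term outside the strip $\{0\leq u\leq d-1,\,v=0\}$ contributes nonpositively. Hence
\[
S \;\leq\; \sum_{u=0}^{d-1} B_{u,0}^{\perp}\, f(u,0) \;=\; \sum_{u=0}^{d-1} B_{u,0}\, f(u,0),
\]
where the equality invokes $B_{u,0}^{\perp}=B_{u,0}$ for $0\leq u\leq d-1$ from (\ref{eq:thm2-2}) together with $B_{0,0}=B_{0,0}^{\perp}=1$. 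Letting $\alpha = \max_{0\leq l\leq d-1} f(l,0)/f_{l,0}$ (well-defined because $f_{l,0}>0$), the pointwise inequality $f(l,0)\leq \alpha f_{l,0}$ together with $B_{l,0}\geq 0$ gives $S\leq \alpha \sum_{u=0}^{d-1} f_{u,0}\, B_{u,0}$.

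For the lower bound on the right-hand side of the bridge identity, nonnegativity of the coefficients and of the enumerator yields
\[
\sum_{i=0}^{n}\sum_{j=0}^{c} f_{i,j}\, B_{i,j} \;\geq\; \sum_{i=0}^{d-1} f_{i,0}\, B_{i,0},
\]
so $S\geq 2^{n+c}\, M \sum_{i=0}^{d-1} f_{i,0}\, B_{i,0}$. Chaining the upper and lower bounds on $S$ and cancelling the common positive factor $\sum_{i=0}^{d-1} f_{i,0}\, B_{i,0}$ (which is at least $f_{0,0}>0$ since $B_{0,0}=1$) gives $2^{n+c}M\leq\alpha$, as required. In the nondegenerate case $B_{u,0}=0$ for $1\leq u\leq d-1$, so both length-$d$ sums collapse to their $u=0$ terms and the argument sharpens to $2^{n+c}M\, f_{0,0}\leq f(0,0)$, which is (\ref{eq:nondegenerate general_bound}).

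The only genuine subtlety, and the main obstacle, is selecting the correct direction of the MacWilliams identity: invoking (\ref{cor:MI_3a}) instead would give $\sum_{u,v} B_{u,v}\, f(u,v) = \frac{2^{n+c}}{M}\sum f_{i,j}\, B_{i,j}^{\perp}$, in which $M$ sits as a divisor, and the same chain of inequalities would produce only the (useless) lower bound $M\geq 2^{n+c}/\alpha$. Beyond that choice, the proof is careful bookkeeping that leverages $B_{i,j}\geq 0$, $B_{u,v}^{\perp}\geq 0$, $f_{i,j}\geq 0$, and the critical equality $B_{u,0}=B_{u,0}^{\perp}$ for $0\leq u\leq d-1$ from (\ref{eq:thm2-2}).
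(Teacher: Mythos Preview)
Your proof is correct and follows essentially the same approach as the paper. The paper presents the argument as a single chain of inequalities starting from $2^{n+c}M\sum_{i=0}^{d-1} f_{i,0}B_{i,0}$, applying (\ref{cor:MI_3}) to reach $\sum_{u,v} B_{u,v}^\perp f(u,v)$, then using (\ref{eq:polin_cond1}) and (\ref{eq:thm2-2}) to bound this by $\sum_{i=0}^{d-1} B_{i,0} f(i,0)$ before dividing through; your version differs only cosmetically in that you isolate the pivot quantity $S$ and bound it from above and below separately.
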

\begin{proof}
Suppose $\cQ$ is an $((n, M, d;c))$ EA quantum code with split weight enumerators $\{B_{i,j}\}$ and $\{B_{i,j}^{\perp}\}$. Then
\begin{align*}
2^{n+c}M\sum_{i=0}^{d-1} f_{i,0}B_{i,0}\le & 2^{n+c} M \sum_{i=0}^n\sum_{j=0}^{c} f_{i,j} B_{i,j}\\
\stackrel{(a)}{=}&2^{n+c}M \sum_{i=0}^n\sum_{j=0}^{c} f_{i,j}
\cdot {1\over 2^{n+c}M} \sum_{u=0}^{n}\sum_{v=0}^{c} B_{u,v}^\perp K_i(u;n)K_j(v;{c})
\stackrel{}{=}\sum_{u=0}^{n}\sum_{v=0}^{c} B_{u,v}^\perp f(u,v)\\
=&\sum_{i=0}^{d-1} B_{i,0}^\perp f(i,0)+\sum_{i=d}^n B_{i,0}^\perp f(i,0)
+\sum_{i=0}^n\sum_{j=1}^c B_{i,j}^\perp f(i,j)\\
\stackrel{(b)}{\le} &\sum_{i=0}^{d-1} B_{i,0}^\perp f(i,0)=\sum_{i=0}^{d-1} B_{i,0} f(i,0),
\end{align*}
where $(a)$ is by  (\ref{cor:MI_3})  from Theorem~\ref{thm:EA MI}; 
$(b)$ follows from assumption  (\ref{eq:polin_cond1}). The last equality is by (\ref{eq:thm2-2}) from Theorem \ref{thm:EA MI}.
Thus
$$
M\le  \frac{1}{2^{n+c}} {\sum_{i=0}^{d-1} B_{i,0} f(i,0) \over \sum_{l=0}^{d-1} f_{l,0}B_{l,0}}
\le \frac{1}{2^{n+c}} \max_{0\le l\le d-1} {f(l,0)\over f_{l,0}}.
$$

If $\cQ$ is nondegenerate,  we have its split weight enumerator $B_{i,0}=0$ for $0<i<d$.
Using this additional constraint in the above proof, we have (\ref{eq:nondegenerate general_bound}).

\end{proof}

It remains to find good polynomials $f(x,y)$ that satisfy (\ref{eq:polin_cond2}) and (\ref{eq:polin_cond1}).

\subsection{Singleton  Bound for EA Quantum Codes}
\label{sec:singleton}
The Singleton bound for EA stabilizer codes was proposed in~\cite{BDM06}, which is obtained by an information-theoretical approach~\cite{Pre98b}. 
Herein we prove a Singleton bound for EA quantum codes. Our bound applies to nonadditive EA quantum codes as well, and is thus more general than the one in~\cite{BDM06}.
\begin{theorem}
For  an $((n,M,d;c))$ EA quantum code $\cQ$, if $d\leq (n+2)/2$, then
\begin{align*}
M\leq  2^{n+c-2(d-1)} .
\end{align*}
If $\cQ$ is nondegenerate, the bound holds for any $d$.
\end{theorem}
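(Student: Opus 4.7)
The plan is to apply Theorem~\ref{thm:gen_upper_bound} with the product polynomial
\[
f(x,y) \;=\; 4^{\,n-d+1}\binom{n-x}{n-d+1}\,\cdot\, 4^{\,c}\binom{c-y}{c},
\]
whose form is dictated by identity~(\ref{eq:KP1}). The $x$-factor is nonnegative and vanishes at every integer $x\in[d,n]$; the $y$-factor equals $4^c$ at $y=0$ and vanishes at every integer $y\in[1,c]$. Hence $f(x,y)=0$ on the entire forbidden lattice region, so condition~(\ref{eq:polin_cond1}) holds. Applying identity~(\ref{eq:KP1}) once with $j=n-d+1$ (length $n$) and once with $j=c$ (length $c$), and using uniqueness of the bivariate Krawtchouk expansion, I would read off
\[
f_{i,j} \;=\; \binom{n-i}{d-1}, \qquad 0\le i\le n,\;0\le j\le c,
\]
which is nonnegative. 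For $0\le i\le d-1$, $f_{i,0}>0$ is equivalent to $n-i\ge d-1$, which holds for every such $i$ precisely when $d\le (n+2)/2$, matching the hypothesis of the first claim.

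Next I would evaluate $f(0,0)=4^{\,n-d+1+c}\binom{n}{d-1}$ and $f_{0,0}=\binom{n}{d-1}$, giving $f(0,0)/f_{0,0}=4^{\,n-d+1+c}$. To handle the general (possibly degenerate) case I must verify the maximum in Theorem~\ref{thm:gen_upper_bound} is attained at $l=0$. Writing the relevant ratio as
\[
\frac{\binom{n-l}{n-d+1}}{\binom{n-l}{d-1}} \;=\; \prod_{k=1}^{l}\frac{d-k}{n-d+2-k},
\]
I note that each factor is $\le 1$ exactly when $d\le(n+2)/2$; consequently the ratio is $\le 1$ with equality at $l=0$, and $\max_{0\le l\le d-1} f(l,0)/f_{l,0} = 4^{\,n-d+1+c}$. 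Theorem~\ref{thm:gen_upper_bound} then delivers
\[
M \;\le\; \frac{1}{2^{n+c}}\cdot 4^{\,n-d+1+c}\;=\;2^{\,n+c-2(d-1)}.
\]

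For the nondegenerate part I want to drop the hypothesis $d\le (n+2)/2$, but with $d>(n+2)/2$ the chosen polynomial has $f_{i,0}=0$ for $n-d+1<i\le d-1$, so Theorem~\ref{thm:gen_upper_bound} cannot be invoked as a black box. The mild obstacle, and really the only delicate point in the whole argument, is to notice that the chain of inequalities in the proof of Theorem~\ref{thm:gen_upper_bound} uses only $B_{i,0}=0$ for $0<i<d$ together with $f_{0,0}>0$; positivity of $f_{i,0}$ for $0<i\le d-1$ is not needed in the nondegenerate case. Since $f_{0,0}=\binom{n}{d-1}>0$ for any $d\le n+1$, the same polynomial and the nondegenerate inequality~(\ref{eq:nondegenerate general_bound}) yield $M\le 2^{\,n+c-2(d-1)}$ unconditionally. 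The remainder of the proof is routine bookkeeping; the conceptual content is concentrated in producing the right bivariate polynomial from identity~(\ref{eq:KP1}).
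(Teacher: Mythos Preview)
Your proof is correct and follows essentially the same route as the paper: apply Theorem~\ref{thm:gen_upper_bound} to a product polynomial whose $x$-factor is $4^{n-d+1}\binom{n-x}{n-d+1}$ (dictated by~(\ref{eq:KP1})) and whose $y$-factor vanishes at $y=1,\dots,c$. Your choice $4^c\binom{c-y}{c}$ is the cleaner normalization, since it makes each $G_j=1$ directly; the paper's stated $y$-factor $4^c\prod_{j=1}^c(y-j)$ differs from yours by $(-1)^c c!$ and, read literally, would give Krawtchouk coefficients of the wrong sign for odd $c$---your formulation avoids that slip. You also spell out two points the paper leaves implicit: the product identity showing $\max_l f(l,0)/f_{l,0}$ is attained at $l=0$ exactly when $d\le(n+2)/2$, and the observation that in the nondegenerate branch of Theorem~\ref{thm:gen_upper_bound} only $f_{0,0}>0$ (not the full condition~(\ref{eq:polin_cond2})) is actually used, so~(\ref{eq:nondegenerate general_bound}) applies for all $d$.
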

\begin{proof}

Let
\begin{align}f(x,y)=4^{n-d+1} \prod_{i=d}^n \left(1 -{x\over i}\right)\cdot 4^{c}  \prod_{j=1}^c (y-j)=4^{n+c-d+1}\frac{{n-x\choose n-d+1}}{{n\choose d-1}}\prod_{j=1}^c (y-j).\label{eq:singleton_poly}
\end{align}
From (\ref{eq:expansion}), (\ref{orth}), and (\ref{eq:KP1}), after some calculation, we have $$f_{i,j}= {{n-i\choose d-1}\over{n\choose d-1}}  \geq 0.$$
It can be easily checked  that $f_{i,0}>0$ for $0\leq i\leq d-1$ and
$f(x,y)=0$ if $x\geq d$ or $y\geq 1$. 
Also $f(0,0)/f_{0,0}\geq f(i,0)/f_{i,0}$ for $1\leq i\leq d-1$ if $d\leq (n+2)/2$. Thus by Theorem \ref{thm:gen_upper_bound},
we have
$M\leq \frac{1}{2^{n+c}}\frac{f(0,0)}{f_{0,0}}= 2^{n+c-2(d-1)}.$

\end{proof}

The assumption $d\leq  (n+2)/2$  is reasonable for quantum codes~\cite{AL99} because of the no-cloning theorem~\cite{WZ82}; however, entanglement can increase the error-correcting ability of quantum codes as Grassl has proposed a construction of EA stabilizer codes with $d> (n+2)/2$ \cite{Gra17}.

The argument used in~\cite{BDM06} does not generalize to this case of $d> (n+2)/2$.
Also, the polynomial $f(x,y)$ in Eq.~(\ref{eq:singleton_poly}) does not work for $d>\frac{n+2}{2}$
since
$f_{l,0}={{n-l\choose d-1}\over{n\choose d-1}} =0 $ if $n-l<d-1$,
which will lead to a trivial upper bound $M<\infty$.
To solve this problem, a possible way is to introduce another polynomial $h(x,y)\triangleq f(x,y)+g(x,y)$
such that $g_{l,0} >0$ for those $l$ with $f_{l,0}=0$.
A candidate is
\[
g^a(x,y)= 4^a \prod_{i=1}^n (x-i)\prod_{j=1}^c (y-j)
\]
with coefficient \[
g^a_{x,y}= 4^{a-n-c}
\]
in the Krawtchouk expansion, where $a$ is some real number chosen appropriately.
Apparently, we can apply  Theorem~\ref{thm:gen_upper_bound} with polynomial $g^a(x,y)$ and obtain another trivial bound $$M\leq 2^{n+c}.$$
Now define
\[
h^a(x,y)=f(x,y)+g^a(x,y)
\]
with coefficient
\[
h^a_{x,y}=f_{x,y}+g^a_{x,y}.
\]
By linearity, we can apply Theorem~\ref{thm:gen_upper_bound} with $h^a(x,y)$. Optimizing over appropriate real numbers $a$, we have the following theorem.
\begin{theorem}(Refined Singleton Bound) \label{thm:ea_singleton}
For  an $((n,M,d;c))$ EA quantum code $\cQ$,  then
\begin{equation*}\label{eq:general_bound}
M\le\frac{1}{2^{n+c}} \min\left\{\max_{0\le l\le d-1} {f(l,0)\over f_{l,0}},\min_{0\leq a\leq n+c} \max_{0\le l\le d-1} {h^a(l,0)\over h^a_{l,0}} \right\}.
\end{equation*}
\end{theorem}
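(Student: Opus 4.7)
The plan is to apply Theorem~\ref{thm:gen_upper_bound} twice, once to the polynomial $f(x,y)$ from \eqref{eq:singleton_poly} and once to the perturbation $h^a(x,y)=f(x,y)+g^a(x,y)$ for each admissible $a$, and then to take the pointwise minimum of the two resulting upper bounds---which is valid because each individual invocation of Theorem~\ref{thm:gen_upper_bound} already ceilings $M$ independently. The first entry of the min is then exactly the conclusion of Theorem~\ref{thm:gen_upper_bound} applied to $f$, whose three hypotheses---nonnegativity of the Krawtchouk coefficients $f_{i,j}$, strict positivity of $f_{i,0}$ for $0\le i\le d-1$, and $f(x,y)\le 0$ at every integer $(x,y)$ with $x\ge d$ or $y\ge 1$---were already verified in the proof of the unrefined Singleton bound. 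In the regime $d>(n+2)/2$ the coefficients $f_{l,0}$ for $n-d+1<l\le d-1$ coincidentally vanish, so this first bound becomes vacuous there; this is precisely the motivation for the second term.

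For the second entry I would verify the same three hypotheses for $h^a$, and most of the work reduces to an analysis of $g^a$. Using the inversion formula \eqref{eq:expansion} together with the observation that $\prod_{i=1}^n(x-i)$ vanishes at every $x\in\{1,\dots,n\}$ and $\prod_{j=1}^c(y-j)$ at every $y\in\{1,\dots,c\}$, only the $(u,v)=(0,0)$ term survives the double sum, producing a Krawtchouk coefficient $g^a_{i,j}$ that is a positive constant independent of $(i,j)$ (equal to $4^{a-n-c}$ after absorbing the factor $(-1)^{n+c}n!\,c!$ into the normalization $4^a$, as stated in the text). The three conditions now follow uniformly in $(i,j)$: by linearity $h^a_{i,j}=f_{i,j}+g^a_{i,j}\ge 0$; the strict positivity $h^a_{l,0}>0$ for $0\le l\le d-1$ holds because the added constant is strictly positive, which is exactly what rescues the vanishing $f_{l,0}$'s; and $h^a(x,y)\le 0$ on the forbidden integer grid because the product-of-linear-factors form of $g^a$ makes it vanish at every integer point except $(0,0)$, reducing the inequality to the already-established $f(x,y)=0$ there. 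Theorem~\ref{thm:gen_upper_bound} then yields $M\le \tfrac{1}{2^{n+c}}\max_{0\le l\le d-1}h^a(l,0)/h^a_{l,0}$ for every admissible $a$, and minimizing over $a\in[0,n+c]$ produces the second argument.

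The main obstacle is conceptual rather than technical: the perturbation $g^a$ must simultaneously (a)~fill in the Krawtchouk coefficients of $f$ that vanish when $d>(n+2)/2$, and (b)~not create any positive value of $h^a$ on the forbidden integer grid $\{(x,y):x\ge d\text{ or }y\ge 1\}$. Together these requirements essentially force $g^a$ to vanish at every integer grid point other than $(0,0)$, which pins down its shape up to a single free scalar---the parameter $a$ that we subsequently optimize over. Once $g^a$ is in hand, the rest of the argument is a routine reapplication of Theorem~\ref{thm:gen_upper_bound}, followed by taking the minimum of the two bounds.
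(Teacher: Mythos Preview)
Your proposal is correct and follows essentially the same route as the paper: the paper's argument for Theorem~\ref{thm:ea_singleton} is simply the sentence ``By linearity, we can apply Theorem~\ref{thm:gen_upper_bound} with $h^a(x,y)$.\ Optimizing over appropriate real numbers $a$, we have the following theorem,'' together with the earlier verification for $f(x,y)$, and you have spelled out precisely those hypothesis checks. Your parenthetical remark about absorbing $(-1)^{n+c}n!\,c!$ into the normalization is well taken---the paper's stated value $g^a_{x,y}=4^{a-n-c}$ glosses over that factor---but this does not change the structure of the argument.
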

Note that the range $0\leq a\leq n+c$ can be enlarged.

The smallest EA quantum code with $d> (n+2)/2$ in~\cite{Gra17} has parameters $[[9,1,6;1]]$.
Applying this theorem with $n=9,d=6,c=1$, and optimizing over $a$ from $0$ to $10$ with increment $0.001$,
we have $$M\leq 3.73\cdots.$$
Therefore, the $[[9,1,6;1]]$ EA quantum code is optimal.

\textbf{Remark:}
The refined Singleton bound for EA quantum codes in Theorem~\ref{thm:ea_singleton} is not monotonic in $d$, while it appears to be monotonic in $c$.
For $n=9,d=7,c=1$, we have $M\leq 5.63$; for $n=9,d=8,c=1$, we have $M\leq 5.18$.
Thus the quest for a good general bound remains open. It is our future direction to find suitable auxiliary polynomials.

\subsection{Hamming Bounds for EA Quantum Codes}

It is known that a nondegenerate $[[n,k,d;c]]$ EA stabilizer code satisfies the EA Hamming bound~\cite{Bowen02,BDM062}.
Herein we  derive a Hamming bound for general EA quantum codes.

\bt
For  an  unrestricted (nondegenerate or degenerate) $((n,M,d;c))$ EA quantum code $\cQ$,
\begin{align}
M\leq 2^{n+c}\max_{0\le l\le d-1} \frac{\sum_{i=0}^{t}\sum_{j=0}^t\sum_{r=0}^{n-l} \alpha(l,i,j,r)}{ \left(\sum_{i=0}^{t} K_i(l;n)\right)^2}, \label{eq:EA deg Hamming bound}
\end{align}
where  $t= \lfloor\frac{d-1}{2}\rfloor$ and $\alpha(l,i,j,r)$ is defined in (\ref{eq:alpha}).

If $\cQ$ is nondegenerate,
\begin{align}
M\sum_{j=0}^t 3^j{n\choose j}\leq {2^{n+c}}{}. \label{eq:EA Hamming bound}
\end{align}

\et
\begin{proof}

Let $f_{l,j}^H=F_l^H $, where
\begin{align}
F_{l}^H = \left(\sum_{i=0}^{t} K_i(l;n)\right)^2. \label{eq:hamming_polynomial}
\end{align}
Using (\ref{orth}) and (\ref{eq:pipj}),
one can show that
\[
f^H(x,y)= 4^{n+c}\sum_{i=0}^{t}\sum_{j=0}^t\sum_{r=0}^{n-x} \alpha(x,i,j,r)
\]
is the polynomial with coefficients $f_{l,j}^H$ in the Krawtchouk expansion.
It can be checked easily that $f^H(x,y)=0$ if $x\geq d$ or $y\geq 1$. Thus we can apply Theorem~\ref{thm:gen_upper_bound}.

\end{proof}

For  degenerate $((n,M,d;c))$ EA quantum codes,
(\ref{eq:EA deg Hamming bound}) and (\ref{eq:EA Hamming bound}) coincide   if $\max_{0\leq l\leq d-1}{f^H(l,0)\over f^H_{l,0}} $ is achieved at $l=0$.
The region of $n,d$ where the quantum Hamming bound holds for degenerate quantum stabilizer codes has been discussed in~\cite{LX10,ALB16}. 
So far, there is no evidence of degenerate quantum codes that violate the nondegenerate Hamming bound~(\ref{eq:EA Hamming bound}).
The same analysis can be considered here.
Note that for fixed $n, d$,  the value of $l$ that maximizes ${f^H(l,0)\over f^H_{l,0}}$ does not depend on~$c$.
In Fig.~\ref{fig:hamming}, we plot the degenerate and nondegenerate Hamming bounds at $d=9$ and $c=3$.
The two bounds coincide after $n=19$.
We have observed similar behaviors for several values of $d$ and $t$.
Thus we have the following conjecture:
\bcj
The nondegenerate Hamming bound  (\ref{eq:EA Hamming bound}) holds for degenerate $((n,M,d;c))$ EA quantum codes for $n\geq N(t)$,
where $N(t)$ does not depend on $c$.
\ecj
The first few values of $N(t)$ are listed in Table~\ref{tb:EA_hamming bound}.
Similar results have been observed for data-syndrome codes~\cite{ALB16}.
\begin{figure}[h]
\[
\hspace{-1cm}\includegraphics[width=8.5cm]{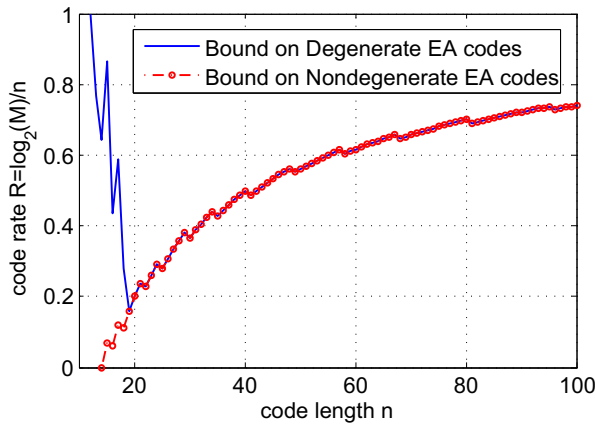}
\]
  \caption{Plots of the degenerate and nondegenerate Hamming bounds at $d=9$ and $c=3$.
}
\label{fig:hamming}
\end{figure}
\begin{table}[h]
\[
\begin{tabular}{|c|c|c|c|c|c|c|c|c|c|}
  \hline
  t    & 2 & 3 & 4 & 5 & 6 & 7 & 8 & 9 & 10 \\
  \hline
  N(t) & 9 & 14 & 19 & 24 & 30 & 35 & 40 & 46 & 51 \\
  \hline
\end{tabular}
\]
\caption{Values of $N(t)$ for $2\leq t\leq 10$.}\label{tb:EA_hamming bound}
\vspace{-0.5cm}
\end{table}

Recall that Theorem \ref{thm:EA GV bound} suggests that $M$ can be larger by introducing entanglement assistance.
Also the degenerate and nondegenerate Hamming bounds diverge at low code rate from the above discussion.
Consequently,  it is likely that EA quantum codes violate the nondegenerate Hamming bound,
as evidences have been provided in \cite{LGX14}: A family of degenerate $[[n=4t,1,2t+1;1]]$ EA stabilizer codes for $t\geq 2$ has been constructed, which violate (\ref{eq:EA Hamming bound}).

On the other hand, the nondegenerate Hamming bound is valid for quantum codes in the asymptotic case for $\delta= {d}/{n}<1/3$~\cite{AL99}, and similarly for EA quantum codes. Then applying  (\ref{eq:stirling}) to (\ref{eq:EA Hamming bound}), we have the following corollary.
\bc
For any $((n,M,d;c))$ EA quantum code with large  $n$,
\[
\frac{\log_2M}{n}\leq 1+\rho-\frac{\delta}{2}\log_2 3- H_2(\frac{\delta}{2})+o(1),
\]
where $\rho=c/n$ and $\delta= {d}/{n}<1/3$.
\ec
By introducing entanglement assistance, distance can be increased~\cite{LB10}.
Again consider the family of $[[n=4t,1,2t+1;1]]$ EA stabilizer codes. These codes have relative distance $\delta=0.5$, which is beyond the working region of the Hamming bound.

\subsection{The First Linear Programming Bound for EA Quantum  Codes} \label{sec:LP1}
The following polynomial is used to prove the first linear programming  bound of classical binary or nonbinary codes~\cite{MRRW77,Aal77,Lev95} and also quantum codes~\cite{AL99}
\begin{align}
F(x)=& \frac{1}{a-x} \left( K_{t+1}(x;n)K_{t}(a;n)-\right.\left.K_{t}(x;n)K_{t+1}(a;n)\right)^2 \label{eq:f(x)}
\end{align}
for some suitable $a$.
Herein we will first discuss the LP-1 bound for quantum codes of finite lengths. It is straightforward to generalize the results to EA quantum codes by choosing $f(x,y)=4^c F(x)\prod_{j=1}^c (y-j)$
and
$f_{j,l}= F_j$.
In particular, we will show that the  LP-1 bound is better than the Hamming bound for large $d/n$.

\subsubsection{LP-1 bound for quantum codes of finite lengths}
\bt
For  an  unrestricted (nondegenerate or degenerate) $((n,M,d;c))$ EA quantum code $\cQ$,
\begin{align}
M\leq \frac{1}{2^{n-c}}\max_{0\le l\le d-1} {F(l)\over F_{j}},
\end{align}
where  $F(x)$ is defined in (\ref{eq:f(x)}) with $t= \lfloor\frac{d-1}{2}\rfloor$.
If $\cQ$ is nondegenerate,
\begin{align}
M\leq \frac{1}{2^{n-c}} \frac{\left( 3^{t+1}{n\choose t+1} K_t(a)- 3^t{n\choose t} K_{t+1}(a)\right)^2}{aF_0}.
\end{align}

\et
\begin{proof}
By the Christoffel-Darboux formula (\ref{ch-dar}),
\begin{eqnarray*}
F(x)&=&\frac{4\cdot 3^t}{t+1}{n \choose t}\left\{K_{t+1}(x)K_t(a)-K_t(x)K_{t+1}(a)\right \}
\sum_{i=0}^t \frac{K_i(x)K_i(a)}{ 3^i {n \choose i }} \\
&=& \frac{4\cdot 3^t }{t+1}{n \choose t} \left(K_t(a) \sum_{i=0}^t \frac{K_i(a)}{3^i {n \choose i}} K_{t+1}(x)K_i(x) -K_{t+1}(a)
 \sum_{i=0}^t \frac{K_i(a)}{3^i {n \choose i}} K_{t}(x)K_i(x)\right).
\end{eqnarray*}
Using (\ref{eq:pipj}) we obtain
$$
F(x) \prod_{j=1}^c (y-j)=4^c \sum_{j=0}^n F_j K_j(x),
$$
with
\begin{align}
F_j 
=&\frac{4\cdot 3^t }{t+1}{n \choose t}
\left( K_t(a) \sum_{i=0}^t \frac{K_i(a)}{3^i {n \choose i}}
\sum_{s=0}^{n-j}  \alpha(j,t+1,i,s)\right. 
\left.
 -K_{t+1}(a) \sum_{i=0}^t \frac{K_i(a)}{3^i {n \choose i}} \sum_{s=0}^{n-j}
  \alpha(j,t,i,s)\right).
\label{eq:f_x}
\end{align}

From (\ref{eq:f_x}) it follows that by choosing appropriate $a\in (r_{t+1},r_{t})$, we can guarantee that $F_j\ge 0$.
Then we can apply Theorem~\ref{thm:gen_upper_bound} with $f(x,y)=4^c F(x)\prod_{j=1}^c (y-j)$ and
$f_{j,l}= F_j$.
\end{proof}

Even for small values of $n$ we have to manipulate by very large numbers during computation of the bound
$$
\max_{0\leq j\leq d-1} F(j)/F_j.
$$
The values of Krawtchouk polynomials and binomial coefficients in (\ref{eq:f_x}) grow very rapidly with $n$. Though packages like Maple and Mathematica
allow one to operate with very large numbers by increasing the precision of computations, straightforward computations of $F(j)$ and $F_j$ are getting very slow even at relatively small $n$.
The following simple tricks allow to speed up Maple computations significantly.

First, the analysis of (\ref{eq:f_x}) shows the limits of summations can be computed more accurately, leading to the equation:
\begin{align}
F_j 
=&\frac{4\cdot 3^t }{t+1}{n \choose t}
\left( K_t(a) \sum_{i=0}^t \frac{K_i(a)}{3^i {n \choose i}}
\sum_{s=\max\{0,(t+i+1-2j)/2,t+1-j\}}^{\min\{n-j,(t+i+1-j)/2\}}  \alpha(j,t+1,i,s)  \right.  \nonumber \\
&
\left.
 -K_{t+1}(a) \sum_{i=0}^t \frac{K_i(a)}{3^i {n \choose i}} \sum_{s=\max\{0,(t+i-2j)/2,t+1-j\}}^{\min\{n-j,(t+i-j)/2\}}
 \alpha(j,t,i,s)
\right).
\label{eq:f_x_new}
\end{align}

Second, in (\ref{eq:f_x_new}) the two summations over $s$ do not depend on $a$. So, for given $j$ and $i$, we can pre-compute and reuse them for getting $F_j$ for different values of $a$.

Third, computation of $K_t(x)$ according to (\ref{rec}) is much faster than according to (\ref{kraw}).

To get a good bound for a particular value of $n$, we have to optimize the choice of parameters $t$ and $a$ in (\ref{eq:f(x)}). The following procedure is used:
\begin{enumerate}[1.]
\item Find the smallest $t$ such that $r_t<d$. In order of doing this, we start with $t_0=\frac{3n}{4}-\frac{d}{2}-\frac{1}{2}\sqrt{3d(n-d)}$ and increase $t$ until  $K_t(d)<0$.
\item Find $r_t$ and $r_{t+1}$.
\item Find
\begin{equation}\label{eq:aopt}
a_{opt}=\argmin_{a\in (r_{t+1},r_t)} \max_{0\leq j\leq d-1} \frac{F(j)}{F_j}.
\end{equation}
\end{enumerate}
For example, for $n=50$ and $d=15$, we have $t=15$ and $r_{t+1}\approx 13.543,~r_t\approx 14.510$. The behavior of $\max_{0\leq j\leq d-1} F(j)/F_j$ as a function of $a$ is shown in Fig.~\ref{fig:Fj/fj}. In all our computations this function was convex. So, we conjecture that this is always the case.
\begin{figure}[h]
\vspace{-0.5cm}
\[
\includegraphics[width=8.5cm]{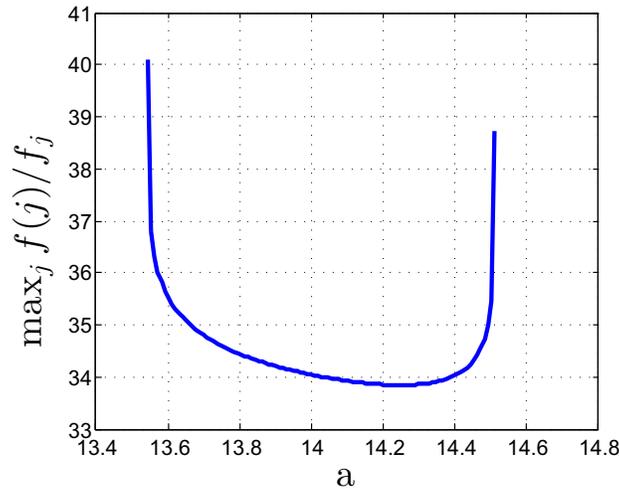}
\]
\vspace{-0.5cm}
\caption{$\max_j F(j)/F_j$ as a function of parameter $a$ for $n=50$ and $d=15$}
 \label{fig:Fj/fj}
\end{figure}

The analysis of (\ref{eq:f_x}) shows that if we fix $n$ and start increasing $d$, then at a certain moment we will have $F_j=0$ for some $j\le d-1$. For instance, for $n=30$ this happens when $d\ge 15$ and for $n=60$ this happens for $d\ge 25$.  To overcome this problem one may try to choose larger $t$ so that it is not the first $t$ for which $r_t<d$. We leave this possibility, however, for future work. In this work we assume that if $F_j=0, j<d$, then LP-1 bound is not applicable.

In the asymptotic regime, for sufficiently large $\delta=d/n$, the LP-1 bound is better than the Hamming bound. Similarly, the LP-1 bound improves the Hamming bound for  codes of finite length.
In Fig.~\ref{fig:n1000}, we show bounds on the code rate $R$, which is defined as
$$
R=\frac{\log_2 M}{n}\le  \log_2 \left(\max_{0\leq j\leq d-1} \frac{F(j)}{F_j}\right) -1+\rho,
$$
where $\rho=c/n$   and $F(j)$ is either the LP-1 polynomial defined in (\ref{eq:f(x)}) or the Hamming polynomial defined in  (\ref{eq:hamming_polynomial}).
In this figure,  we have $\rho=0$  and $n=1000$.

\begin{figure}[h]
\[
\includegraphics[width=8.5cm]{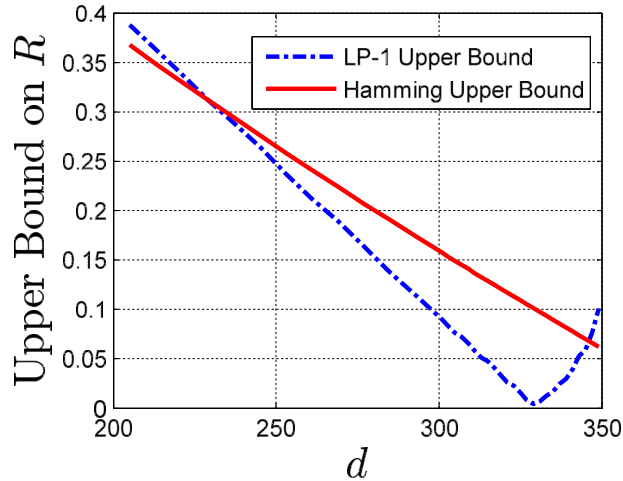}
\]
\caption{The LP-1 and Hamming Upper Bounds for $n=1000, \rho=0$ and $d=200,\ldots,350$.}
 \label{fig:n1000}
\end{figure}

One can see that for large values of $d$, the LP-1 bound visibly improves the Hamming bound.

\subsubsection{LP-1 bound for large values of $n$}

Packages like Maple and Mathematica allow one to increase the precision of computations over real numbers by increasing the number of digits for representing real numbers. For example, in Maple the precision can be increased by assigning larger values to the system variable DIGITS (its default value is $10$). For large values of $n$, however, the needed precision becomes overwhelming.

To overcome the above problem, we can compute LP-1 bound using only integers as follows. When $n$ grows, the roots $r_{t+1}$ and $r_t$ are getting closer and closer to each other. Numerical computations  show that  choosing $a$ such that $K_t(a)=-K_{t+1}(a)$ in this case, we get a value very close to $a_{opt}$ defined in (\ref{eq:aopt}). Let us denote this $a$ by $a^*$.  Then we  have
$$
F(j)=\frac{1}{a^*-x}
K_t(a^*)^2\left\{ K_{t+1}(j)-K_{t}(j)\right\}^2.
$$
Next we lower bound $F_j$ by using only the term with $i=t$ in (\ref{eq:f_x_new}):
\begin{align}
F_j
\ge &F^{(low.bnd.)}_j=K_t(a^*)^2 {4\over t+1}
\left(
\sum_{s=\max\{0,(2t+1-2j)/2,t+1-j\}}^{\min\{n-j,(2t+1-j)/2\}} \alpha(j,t+1,t,s) \right.   
&
\left.
 +  \sum_{s=\max\{0,(2t-2j)/2,t+1-j\}}^{\min\{n-j,(2t-j)/2\}}
  \alpha(j,t,t,s)\right)\nonumber\\
  \triangleq&K_t(a^*)^2 {4\over t+1} \cdot c_j.
\label{eq:f_x_new2}
\end{align}

Finally, we find the largest integer $a'<d$ such that $K_{t+1}(a')>0$. Since $a^*$ lies somewhere between $r_{t+1}$ and $r_t$, we have $a'\le a^*$.
Hence using $a'$ in the denominator of (\ref{eq:f(x)}), we make our LP-1 bound larger (worse).

Summarizing these arguments, we obtain that for the optimal LP-1 polynomial $F(j)$ (with $a_{opt}$) we have
\begin{equation}\label{eq:lp1_integers_only}
\frac{F(j)}{F_j}\le \frac{1}{a'-j}
\left( K_{t+1}(j)-K_{t}(j)\right)^2 \cdot c_j,
\end{equation}
where $c_j$ is defined in (\ref{eq:f_x_new2}). During computation of (\ref{eq:lp1_integers_only}) we need to operate only with integers, no operations over real numbers are needed until the very end of computing:
$$
\log_2 (\max_{0\leq j\leq d-1} \frac{1}{a'-j}
\left\{ K_{t+1}(j)-K_{t}(j)\right\}^2 \cdot c_j) -1+\rho.
$$
Thus we do not lose any precision in computations. Using this approach we obtained the results presented in Fig.~\ref{fig:d03} for $\rho=0$.
We consider the case when the code length grows, but the ratio $d/n=0.3$ is fixed. We again observed that starting with some small value of $n$, LP-1 bound visibly improves the Hamming bound.

\begin{figure}[h]
\[
\includegraphics[width=8.5cm]{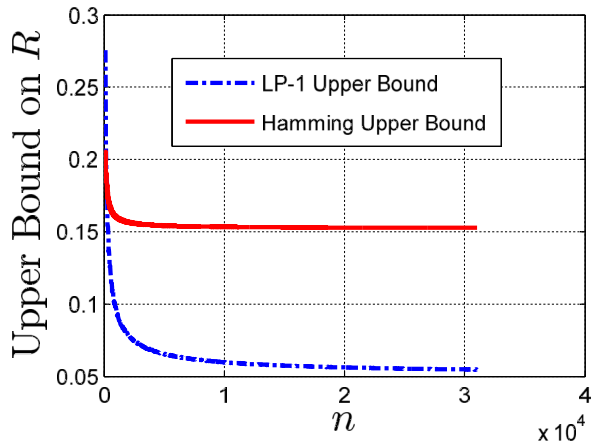}
\]
\caption{LP-1 and Hamming Upper Bounds for code lengths $n=30,\ldots,1000$, $\rho=0$, and $d=0.3n$.}
 \label{fig:d03}
\end{figure}

\subsubsection{Asymptotic case}

Following the same procedure (see \cite{AL99}, for details), we can show that the conventional first linear programming bound holds for EA quantum codes asymptotically.
\bt
For any $((n,M,d;c))$ EA quantum code $\cQ$ with $0\leq \delta\leq 0.3152$,
\begin{align}
R\leq&  \rho- 1+H_2\left(\frac{3}{4}-\frac{\delta}{2}-\frac{1}{2}\sqrt{3\delta(1-\delta)} \right)
+ \left(\frac{3}{4}-\frac{\delta}{2}-\frac{1}{2}\sqrt{3\delta(1-\delta)} \right)\log_2 3+o(1) \label{eq:EA LP}
\end{align}
for large $n$. If $\cQ$ is nondegenerate, (\ref{eq:EA LP}) holds for all $\delta$.
\et
Note that this upper bound differs from that for classical quaternary codes by a constant $\left(\rho-1\right)$. 

Let us consider again the family of degenerate $[[4t,1,2t+1;1]]$ EA stabilizer codes.
These codes have asymptotic $R=0$, $\rho=0$, and $\delta=0.5$, which are beyond the working region of the first linear programming bound.

\subsection{Bounds for Maximal-entanglement EA Stabilizer Codes}

In this subsection we consider $[[n,k,d;c=n-k]]$  maximal-entanglement EA stabilizer codes, where 
there is no degeneracy~\cite{LBW13}.
Lai \emph{et al.} proved a Plotkin bound for maximal-entanglement EA stabilizer codes~\cite{LBW10} and its asymptotic version is:
\begin{align*}
\delta \leq 0.75+o(1).
\end{align*}
The asymptotic Gilbert-Varshamov bound, Singleton bound, Hamming bound, first linear programming bound are as follows:
\begin{align*}
R\geq& 1- \frac{\delta}{2} \log_2 3 -\frac{1}{2}H_2(\delta),\\
R\leq& 1- \delta,\\
R\leq& 1-\frac{\delta}{4}\log_2 3- \frac{1}{2}H_2(\frac{\delta}{2})+o(1),\\
R\leq& \frac{1}{2}H_2\left(\frac{3}{4}-\frac{\delta}{2}-\frac{1}{2}\sqrt{3\delta(1-\delta)} \right) 
+ \frac{1}{2}\left(\frac{3}{4}-\frac{\delta}{2}-\frac{1}{2}\sqrt{3\delta(1-\delta)} \right)\log_2 3+o(1).
\end{align*}
Thus we have shown that conventional bounds for classical quaternary codes hold for maximal-entanglement EA stabilizer codes.
These bounds are plotted in Fig.~\ref{fig:LP_maximal}.

\begin{figure}[ht]
\hspace{-1cm}\[\includegraphics[width=9.0cm]{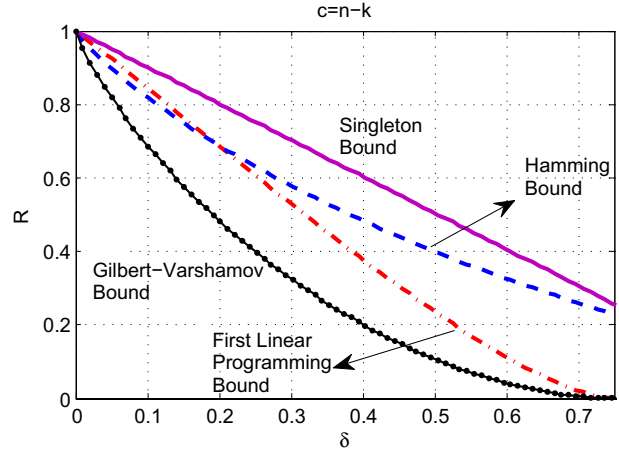}\]
  \caption{Plots of the Singleton bound, Hamming bound, first linear programming bound, and the Gilbert-Varshamov bound. The $x$-axis ranges from 0 to 0.75 because of the Plotkin bound.
}
\label{fig:LP_maximal}
\end{figure}

\section{Upper Bounds for Quantum Stabilizer Codes of Small Length} \label{sec:small codes}
In the previous section we have introduced algebraic linear programming bounds for EA quantum codes. We will discuss linear programming bounds on the minimum distance of EA stabilizer codes of small length.
In particular, we find additional constraints on a stabilizer group and its orthogonal group so that the linear programming bounds can be strengthened.
A table of  upper and lower bounds for any maximal-entanglement EA stabilizer codes with  length $n$ up to $15$ is given in \cite{LBW13}.
However, there are many ``gaps" between the upper  and lower bounds. We will improve that table in this section.

We would like to apply linear programming techniques to upper bound the minimum distance of EA stabilizer codes.
However, since ${\cS^{\perp}}\neq \cS_{\text{I}}$, we cannot use a MacWilliams identity to connect the weight distributions of these two groups.
Observe that \begin{align}
\overline{(\cS_{\text{S}}\times\cS_{\text{I}})^{\perp}}= \overline{\cL\times \cS_{\text{I}}}   \label{eq:EA stabilizerC_duality1}
\end{align}
and
\begin{align}
\overline{\cS_{\text{I}}^{\perp}} = \overline{\cL\times \cS_{\text{S}}\times \cS_{\text{I}} }. \label{eq:EA stabilizerC_duality2}
\end{align}
In \cite{LBW10,LBW13}, these two dualities are considered to obtain the
 linear programming bounds on the minimum distance of small EA stabilizer codes.

Interestingly, the split weight enumerators $\{B_{i,j}\}$ and $\{B_{i,j}^{\perp}\}$ of EA quantum codes defined in (\ref{eq:Bij}), (\ref{eq:Bij_perp})
have enough information for our linear programming problem.
Let
$\cS_{\text{S}}'= \langle g_{1}', h_{1}', \dots, g_{c}', h_{c}'\rangle,$ 
$\cS_{\text{I}}'= \langle g_{c+1}', \dots, g_{n-k}'\rangle,$ 
and $\cL'= \{g^A\otimes \mathbb{I}^B: g\in \cL\}$. Define
\begin{align}
&A_{i,j}= | \{E_1\otimes E_2\in \overline{\cS_{\text{S}}'\times \cS_{\text{I}}'}: E_1\in \overline{\cG}_n, \wt{E_1}=i,  E_2\in \overline{\cG}_c, \wt{E_2}=j \}|, \label{eq:A_ij}\\
&A_{i,j}^{\perp}= | \{E_1\otimes E_2\in \overline{\cL'\times  \cS_{\text{S}}'\times \cS_{\text{I}}'}: E_1\in\overline{\cG}_n, \wt{E_1}=i, E_2\in \overline{\cG}_c, \wt{E_2}=j \}|. \label{eq:A_ij_perp}
\end{align}
Observe that $B_{i,j}= A_{i,j}$ and  $B_{i,j}^{\perp}= A_{i,j}^{\perp}$ in the case of EA stabilizer codes. Thus we have the following corollary.
\bc Suppose $\cQ$ is an $[[n,k,d;c]]$ EA stabilizer code with stabilizer group $\cS_{\text{S}}'\times \cS_{\text{I}}'$.
Let  $\{A_{i,j}\}$ and $\{A_{i,j}^{\perp}\}$ be the split weight distributions of $\overline{\cS_{\text{S}}'\times \cS_{\text{I}}'}$ and $\overline{(\cS_{\text{S}}'\times \cS_{\text{I}}')^{\perp}}$, respectively. Then
\begin{equation*}
A_{i,j}^\perp={1\over {2^{n+c-k}}}\sum_{u=0}^n\sum_{v=0}^{c}A_{u,v} K_i(u;n) K_j(v;c).
\end{equation*} \label{cor:MI_EA stabilizer}
\ec
Furthermore, the weight distribution of $\cS_{\text{I}}$ is $\{A_{i,0}\}$
and the weight distribution of $\overline{\cS^{\perp}}$ is $\{A_{i,0}^{\perp}\}$.
Since the minimum distance of $\cC(\cS')$ is $d$, we have
\begin{align}
&A_{i,0}=A_{i,0}^{\perp}, \ i=1,\dots,d-1;
A_{d,0}^{\perp}>A_{d,0}.  \label{eq:Aij distance}
\end{align}
Now we can build a linear program with variables $A_{i,j}, A_{i,j}^{\perp}$ and constraints from Corollary~\ref{cor:MI_EA stabilizer} and (\ref{eq:Aij distance}).
The least $d$ such that this linear program has no feasible solution is an upper bound on the minimum distance for given $n,k,c$,
and this is called the linear programming bound  on the minimum distance   of $[[n,k;c]]$ EA stabilizer codes.
The resulting linear programming bound is at least as tight as the one from (\ref{eq:EA stabilizerC_duality1}) and (\ref{eq:EA stabilizerC_duality2}) used in \cite{LBW10,LBW13},
since this linear program has more constraints and variables.

\subsection{Additional Constraints on the Weight Enumerator Associated with an Non-Abelian Pauli Subgroup} \label{sec:new constraints}

Rains introduced the idea of \emph{shadow enumerators} to obtain additional constraints in the linear program of quantum codes~\cite{Rains96}.
The shadow ${S}h(\cV)$ of a group $\cV (\subseteq \cG_n)$ is the set
\[
\{ E\in \overline{\cG}_n : \langle E,g\rangle_{{\cG}_n}= \wt{g} \mod 2, \ \forall g\in \cV  \}.
\]
If $\cV$ is Abelian, Rains showed that
\begin{align}
W_{Sh(\cV)}(x,y)=\frac{1}{|\mathcal{V}|}W_\cV(x+3y, y-x),\label{eq:MI_shadow}
\end{align}
where $W_\cV(x,y)$ and $W_{Sh(\cV)}(x,y)$ are the weight enumerators of $\cV$ and ${S}h(\cV)$, respectively \cite{Rains96}.
For an EA stabilizer code, the isotropic subgroup $\cS_{\text{I}}$ is Abelian. Thus
\begin{align*}
W_{Sh(\cS_{\text{I}})}(x,y)=\frac{1}{|\cS_{\text{I}}|}W_{\cS_{\text{I}}}(x+3y, y-x).
\end{align*}
Recall that $\{A_{i,0}\}$ is the weight distribution of $\cS_{\text{I}}$. This implies
\begin{align}
\sum_{w'=0}^n (-1)^{w'} A_{w',0} K_{w}(w';n)\geq 0, \ i=0,\dots,n. \label{eq:shadow_enumerator}
\end{align}
However, the situation is more complicated when $\cV$ is not Abelian,
which is the case of the simplified stabilizer group $\mathcal{S}= \cS_{\text{S}}\times \cS_{\text{I}}$ and its orthogonal group.
Herein we derive additional constraints for an non-Abelian Pauli subgroup.

Suppose $\cV\subset {\cG}_n$ is non-Abelian and $\cV$ can be decomposed as $\cV= \cV_{\text{S}}\times \cV_{\text{I}}$, where $\cV_{\text{S}}= \langle g_1, h_1, \dots, g_c,  h_c\rangle$
and $\cV_{\text{I}}= \langle g_{c+1}, \dots, g_{c+r} \rangle$ for some $c$ such that the commutation relations (\ref{eq:commutation_1})-(\ref{eq:commutation_4}) hold~\cite{FCYBC04}.
We categorize different $\cV$'s into the following three types:
\begin{enumerate}[I.]
  \item All the generators of $\cV$ are of even weight.

  \item  $\wt{g_{c+1}}$ is odd  and all the other generators of $\cV_{\text{I}}$ and $\cV_{\text{S}}$ are of even weight.

  \item  $\wt{g_{1}}$ and $\wt{h_{1}}$ are odd and all the other generators of $\cV_{\text{S}}$ and $\cV_{\text{I}}$ are of even weight.

\end{enumerate}

For convenience, we have the following lemma, which can be easily verified.
\bl \label{lemma:weight}
For $g, h\in {\cG}_n$,
\[
\wt{gh}\equiv \wt{g}+\wt{h}+\langle g,h\rangle_{{\cG}_n} \mod 2.
\]
\el
\noindent This lemma shows that the weight of the product of two operators depends on their inner product.

\bl \label{lemma:cases}
Suppose $\cV= \cV_{\text{S}}\times \cV_{\text{I}}$, where $\cV_{\text{S}}= \langle g_1, h_1, \dots, g_c,  h_c\rangle$
and $\cV_{\text{I}}= \langle g_{c+1}, \dots, g_{c+r} \rangle$ for some $c$ such that (\ref{eq:commutation_1})-(\ref{eq:commutation_4}) hold.
Then the generators are of type I, II, or III.
\el
\begin{proof}

Suppose  $\cV_{\text{I}}$ contains some elements of odd weight, say $g_{c+1}$ without loss of generality.
If $g_j$ or $h_j$ is of odd weight for $j\neq c+1$, it is replaced by $g_jg_{c+1}$ or $h_{j}g_{c+1}$, which is of even weight by
 (\ref{eq:commutation_1}), (\ref{eq:commutation_3}), and Lemma \ref{lemma:weight}.
Eqs. (\ref{eq:commutation_1})-(\ref{eq:commutation_4}) hold for the new set of generators. This is type II.

Next, suppose $\cV_{\text{I}}$ contains no elements of odd weight.
Consider the generators of $\cV_{\text{S}}$.
We know that $\cV_{\text{S}}$ has some elements of odd weight, because  one of $g_1$, $h_1$, and $g_1h_1$ must be of odd weight.
Assume $g_{1}$  is of odd weight without loss of generality.
If $g_{i}$ is of odd weight for $i=2, \dots, c$,
we replace it with $g_ig_{c+1}$, which is of even weight by
(\ref{eq:commutation_1}) and Lemma~\ref{lemma:weight}.
Notice that $h_{1}$ has to be replaced by $h_{1}h_i$ at the same time to maintain the commutation relation in (\ref{eq:commutation_3}).
Thus $\wt{g_{1}}$ is odd and $\wt{g_{2}}, \dots, \wt{g_{c}}$ are even.
If  $h_{i}$ has odd weight for $i=2, \dots, c$,
we replace it with $h_ig_{1}$, which is of even weight by  (\ref{eq:commutation_2}) and Lemma \ref{lemma:weight}.
Similarly, $h_{1}$ has to be replaced by $h_{1}g_i$ to maintain the commutation relation in  (\ref{eq:commutation_2}).
Therefore, we can assume $h_{2},\dots, h_{c}$ are of even weight.

It remains to check whether $h_{1}$ is of odd weight or not.
If $\wt{h_{1}}$ is odd, this is type III.
If $\wt{h_{1}}$ is even, we replace $g_{1}$ with $g_{1}h_{1}$, which is of even weight by Eq. (\ref{eq:commutation_4}) and Lemma~\ref{lemma:weight}.
This case is type I.

\end{proof}%

Suppose $\cV\subset \cG_n $ is generated by $c$ pairs of symplectic partners and $r$ isotropic generators of type~$i$.
Let $N^{\even}_{i}(c,r)$ and  $N^{\odd}_{i}(c,r)$ be the number of elements in $\overline{\cV}$ of  even and  odd weight, respectively.
Below we derive formulas for these two numbers for the three types of generators in Lemma \ref{lemma:cases}.
\bt \label{thm:new constraints}
For $c,r\geq 0$,
\begin{align*}
N^{\even}_{\mbox{I}}(c,r)&=2^{r-1}(4^{c}+2^{c}),\\
N^{\odd}_{\mbox{I}}(c,r)&=2^{r-1}(4^{c}-2^{c}),\\
N^{\even}_{\mbox{III}}(c,r)&=2^{r-1}(4^{c}-2^{c}),\\
N^{\odd}_{\mbox{III}}(c,r)&=2^{r-1}(4^{c}+2^{c}).
\end{align*}
For $c\geq 0$ and $r>0$,
\begin{align*}
&N^{\even}_{\mbox{II}}(c,r)=N^{\odd}_{\mbox{II}}(c,r)=2^{2c+r-1}.
\end{align*}
\et
\begin{proof}
These formulas can be derived by recursion.
We first consider the case  of type I.
It is obvious that $N^{\even}_{\mbox{I}}(0,0)=1$ and $N^{\odd}_{\mbox{I}}(0,0)=0$.
(In this case, $\mathcal{V}$ is the trivial subgroup.)
For $c=1$, we have two generators $g_1$ and $h_1$ of even weight
such that $\langle g_1,h_1\rangle_{{\cG}_n}=1$.
Thus $\wt{g_1 h_1}\equiv \wt{g_1}+\wt{h_1}+ \langle g_1,h_1\rangle_{{\cG}_n} \equiv 1 \mod 2$.
We have  $N^{\even}_{\mbox{I}}(1,0)=3$ and $N^{\odd}_{\mbox{I}}(1,0)=1$.
Since $\mathcal{V}_{\text{I}} \subseteq \mathcal{V}_{\text{S}}^{\perp}$ and the generators of $\mathcal{V}_{\text{I}}$ are of even weight (hence every element of $\mathcal{V}_{\text{I}}$ is of even weight), it follows that
 $N^{\even}_{\mbox{I}}(1,r)=3\cdot 2^r$ and $N^{\odd}_{\mbox{I}}(1,r)=2^r$.

Let $S_c$ denote a group generated by $c$ pairs of symplectic partners.
Now we add a new pair of symplectic partners $g'$ and $h'$ of even weight
and  $g', h' \in S_c^{\perp}$.
An element $E$ in the new group $S_{c+1}$ is of even weight if one of the following cases hold:
 $E\in S_{c}$ and $\wt{E}\equiv 0 \mod 2$ ;
$E=g' E'$ for some  $E'\in S_c$ and $\wt{E'}\equiv 0 \mod 2$ ;
$E=h' E'$ for some  $E'\in S_c$ and $\wt{E'}\equiv 0 \mod 2$ ;
or $E=g'h' E'$ for some  $E'\in S_c$ and $\wt{E'}\equiv 1 \mod 2$.
Therefore
\begin{align*}
&N^{\even}_{\mbox{I}}(c+1,r)= 3 N^{\even}_{\mbox{I}}(c,r)+N^{\odd}_{\mbox{I}}(c,r).
\end{align*}
Similarly, we have
\begin{align*}
&N^{\odd}_{\mbox{I}}(c+1,r)= 3 N^{\odd}_{\mbox{I}}(c,r)+N^{\even}_{\mbox{I}}(c,r).
\end{align*}
Solving this system of two recursive equations, we have
\begin{align*}
&N^{\even}_{\mbox{I}}(c,r)=2^{r-1}(4^{c}+2^{c}),\\
&N^{\odd}_{\mbox{I}}(c,r)=2^{r-1}(4^{c}-2^{c}).
\end{align*}
We can find the formula for the generators of type III along the same lines. 

Now we consider the case of type II for $r>0$.
Let $S_c$ denote a group generated by $c$ pairs of symplectic partners of even weight.
From above, we have $N^{\even}_{\mbox{I}}(c,0)$ and $N^{\odd}_{\mbox{I}}(c,0)$.
Now we add a new generator $g \in S_c^{\perp}$ and $\wt{g}$ is odd.
An element $E$ in the new group $g\times S_{c}$ is of even weight if one of the following two cases hold:
 $E\in S_{c}$ and $\wt{E}\equiv 0 \mod 2$;
or $E=g E'$ for some  $E'\in S_c$ and $\wt{E'}\equiv 1 \mod 2$.
Thus
\begin{align*}
&N^{\even}_{\mbox{II}}(c,1)=N^{\even}_{\mbox{II}}(c,0)+N^{\odd}_{\mbox{II}}(c,0).
\end{align*}
Similarly,
\begin{align*}
&N^{\odd}_{\mbox{II}}(c,1)=N^{\even}_{\mbox{II}}(c,0)+N^{\odd}_{\mbox{II}}(c,0)=N^{\even}_{\mbox{II}}(c,1).
\end{align*}
It implies $N^{\odd}_{\mbox{II}}(c,1)=N^{\odd}_{\mbox{II}}(c,1)=2^c$.
Therefore, we have
\begin{align*}
&N^{\even}_{\mbox{II}}(c,r)=N^{\odd}_{\mbox{II}}(c,r)=2^{2c+r-1}.
\end{align*}

\end{proof}
These formulas serve as constraints on the weight distributions of subgroups that are Abelian or non-Abelian.
Note that $\left|N^{\even}_i(c,0)-N^{\odd}_i(c,0)\right|=2^c$ for $i=\mbox{I, III}$.

In the case of standard stabilizer codes, we have Abelian stabilizer group $\cS$ and  non-Abelian normalizer group $\cS^{\perp}$.
Thus we have the following corollary.
\bc \label{cor:cns stabilizer}
Suppose $\cS$  is an Abelian subgroup of ${\cG}_n$ with $n-k$ generators.
Let $\{A_j\}$ and $\{A_j^{\perp}\}$ be the weight distributions of $\cS$ and $\overline{\cS^{\perp}}$, respectively. Then one of the following three cases holds:
\[
  \begin{array}{ll}
   1.\ \sum_{j: j \text{ even}} A_j =2^{n-k-1}, &\sum_{j: j\text{ even}} A_j^{\perp} =2^{n+k-1};\\
   2.\ \sum_{j: j\text{ even}} A_j =2^{n-k}, &\sum_{j: j\text{ even}} A_j^{\perp} =2^{n-k-1}\left(4^k+2^k\right);\\
   3.\ \sum_{j: j\text{ even}} A_j =2^{n-k}, &\sum_{j: j\text{ even}} A_j^{\perp} =2^{n-k-1}\left(4^k-2^k\right).
  \end{array}
\]
\ec

\subsection{Linear Program for EA Stabilizer Codes}
Now we provide a linear program for EA stabilizer codes in this subsection.
Let $n,k,d,c$ be integers. Define  integer variables
\begin{align*}
A_{i,j},A_{i,j}^{\perp}, i=0,\dots, n, j=0,\dots, c. 
\end{align*}
The following constraints are from Theorem~\ref{thm:EA MI}, Corollary~\ref{cor:MI_EA stabilizer}, Eq. (\ref{eq:shadow_enumerator}), and Theorem~\ref{thm:new constraints}:
\begin{align*}
&A_{0,0}=A_{0,0}^{\perp}=1, \ A_{i,j}^{\perp}\geq A_{i,j}\geq 0;\\
&A_{i,0}=A_{i,0}^{\perp}, \ i=1,\dots,d-1;
A_{0,j}=A_{0,j}^{\perp}=0, \ j=1,\dots,c;\\
&\sum_{w=0}^n A_{w,0}= 2^{n-k-c},\ \sum_{w=0}^n\sum_{w'=0}^c A_{w,w'}=2^{n-k+c},
\sum_{w=0}^n A_{w,0}^{\perp} =2^{n+k-c}, \ \sum_{w=0}^n\sum_{w'=0}^c A_{w,w'}^{\perp} =2^{n+k+c};\\
&A_{i,j}^\perp={1\over {2^{n+c-k}}}\sum_{u=0}^n\sum_{v=0}^{c}A_{u,v} K_i(u;n) K_j(v;c);\\
&\sum_{w'=0}^n (-1)^{w'} A_{w',0} K_{w}(w';n)\geq 0, \ i=0,\dots,n;
\end{align*}
\begin{align*}
&\left( \sum_{w: \even} \sum_{w'} A_{w,w'} = N^{\even}_{\mbox{i}}(c,n-k-c),\ \mbox{ for $i=$ I or III},\right.
 \sum_{ w: \even} A_{w,0}^{\perp} = N^{\even}_{\mbox{i}}(k,n-k-c),
\ \mbox{ for $i=$ I or III}, \\
& \sum_{w: \even}  \sum_{w'}A_{w,w'}^{\perp} = N^{\even}_{\mbox{i}}(k+c,n-k-c),
\ \mbox{ for $i=$ I or III},
\mbox{and } \left.\sum_{ w: \even} A_{w,0} =2^{n-k-c} \right) \\
& \mbox{or } \left(\sum_{w: \even} \sum_{w'}A_{w,w'} = N^{\even}_{\mbox{II}}(c,n-k-c),\right.
 \sum_{ w: \even} A_{w,0}^{\perp} = N^{\even}_{\mbox{II}}(k,n-k-c),\\
& \sum_{w: \even}  \sum_{w'}A_{w,w'}^{\perp} = N^{\even}_{\mbox{II}}(k+c,n-k-c),
\mbox{and } \left.\sum_{ w: \even} A_{w,0} = 2^{n-k-c-1}\right).
\end{align*}
If there is no solution to the integer program with variables $A_{i,j},A_{i,j}^{\perp}$ and the above constraints for given $n,$ $k,$ $d,$ and~$c$, then  there is no  $[[n,k,d;c]]$ EA stabilizer code.

We can introduce additional constraints from MacWilliams identities for \emph{general} split weight enumerators~\cite{LHL14}.
In the following, we use the notation in~\cite{LHL14}. Let $\cH_{\overline{\cG}_1}$ be the complex Hilbert space with an orthonormal basis $\ket{I}, \ket{X}, \ket{Y}, \ket{Z}$.
Let $\cH_{\overline{\cG}_n}= \cH_{\overline{\cG}_1}^{\otimes n}$. Then $\overline{\cS}\subset \overline{\cG}_n$ has an exact weight generator $$g^{E}_{\overline{\cS}}=\sum_{g\in\overline{\cS}}\ket{g}\in \cH_{\overline{\cG}_n}.$$
The MacWilliams identity says that
\begin{align}
g^{E}_{\overline{\cS^{\perp}}}=\frac{1}{|\cS|} \cF^{\otimes n} g^{E}_{\overline{\cS}},   \label{eq:MI_exact}
\end{align}
where $\cF$ is the Fourier transform  operator on $\cH_{\overline{\cG}_1}$ and its matrix representation in the ordered basis $\ket{I},\ket{X},\ket{Y},\ket{Z}$ is
\begin{align}
\cF=\left[\begin{array}{rrrr}
1&1&1&1\\ 1&1&-1&-1 \\ 1&-1&1&-1 \\1&-1&-1&1
\end{array}\right]. \label{eq:pauli fourier matrix}
\end{align}
 Let $\gamma: \cH_{\overline{\cG}_n}\rightarrow \mathbb{C}$ be a linear functional. Then applying $\gamma$ to (\ref{eq:MI_exact}), we have the MacWilliams identities for general split weight enumerators defined by $\gamma$. (See~\cite{LHL14} for more details.)
For example,  let
\begin{align*}
\gamma_\text{H}(x,y)= x\bra{I}+\sum_{\alpha\in\{X,Y,Z\} } y\bra{\alpha},
\end{align*}
where $x,y\in \mathbb{C}$.
Then the general Hamming weight enumerator of $\cS$ and  its MacWilliams identity are defined by $\gamma^{\otimes n}_\text{H}(x,y)$.
The split weight enumerators (\ref{eq:A_ij}), (\ref{eq:A_ij_perp}) are defined by $\gamma^{\otimes n}_\text{H}(x,y)\otimes \gamma^{\otimes c}_\text{H}(u,v)$.

In general, an arbitrary $\gamma$ and its induced MacWilliams identity will lead to more constraints on the coefficients of $\cS$ and $\cS^{\perp}$.
However, these general split weight enumerators usually introduce too many variables to solve in a computer program. If we know more about a stabilizer group, we can design an effective $\gamma$. In~\cite{CRSS98}, the idea of \emph{refined weight enumerator} is introduced when $\cS$ is known to have an element of weight $u$.
Let \begin{align*}
\gamma_\text{re}(y_0,y_1,y_2)= y_0\bra{I}+y_1\bra{X}+\sum_{\alpha\in\{Y,Z\} } y_2\bra{\alpha}.
\end{align*}
Then a refined weight enumerator is defined as
\begin{align}
\gamma_\text{H}^{\otimes n-u}(x_0,x_1)\otimes \gamma_\text{re}^{\otimes u}(y_0,y_1,y_2) g_\cS^E =\sum_{i,j,l} A_{i,j,l} x_0^{n-u-i}x_1^i y_0^{u-j-l}y_1^jy_2^l.\label{eq:refined}
\end{align}
Suppose $\{A_i\}$ is the weight distribution of $\cS$. Then
\[
A_w= \sum_{i,j,l: i+j+l= w}A_{i,j,l}.
\]
Without loss of generality, we may assume $\cS$  has the element $I^{\otimes n-u}\otimes X^{\otimes u}$ of weight $u$.
Since $\cS$ is Abelian, we must have $A_{i,j,l}=0$ for $l\neq 0 \mod 2$. Similar constraints can be found for the distribution $A^{\perp}_{i,j,l}$ of $\overline{\cS^{\perp}}$.
From (\ref{eq:MI_exact}), we have additional constraints from the MacWilliams identities on $A_{i,j,l}$ and $A_{i,j,l}^{\perp}$.

Next we use Mathematica~\cite{Mathematica} to see if the integer program has a feasible solution for some parameters. Since Mathematica is a symbolic manipulation program, the results are reliable.
\be (Nonexistence of [[27,15,5]] quantum codes.) \label{ex:27}
In the case of $c=0$,  $\cS_{\text{S}}$ is trivial and it reduces to the case of  standard stabilizer codes.
Solving  the linear program for stabilizer codes~\cite{CRSS98} with additional constraints from Corollary~\ref{cor:cns stabilizer},
we found that a $[[27,15,5]]$ stabilizer code, if it exists,  must have  $\cS^{\perp}$  generated by type III generators and the only possible weight distribution of $\cS$ is
\begin{align*}
A_0=1, A_{16}=81, A_{18}=1800, A_{22}=1944, A_{24}=270,
\end{align*}
and $A_w=0$, otherwise.

Using  refined weight enumerators with respect to $u=24$,
we have  additional constraints from the MacWilliams identities for the refined weight enumerators,
which produce a linear program with no feasible solution.
Thus we improved the quantum code table at $n=27,k=15$~\cite{Grassl}.

\ee

\be (Nonexistence of [[28,14,6]] quantum codes.)
We found that  a $[[28,14,6]]$ stabilizer code, if it exists,  must have $\cS^{\perp}$ generated by type I generators
and the only possible weight distribution of $\cS$ is
\begin{align*}
&A_0=1, A_{16}=189, A_{18}=5040, A_{22}=9072,A_{24}=1890,A_{28}=192,
\end{align*}
 and $A_w=0$, otherwise. This implies that a $[[28,14,6]]$ code must be nondegenerate.
By the propagation rule  that the existence of a nondegenerate $[[n,k,d]]$ code implies the existence of an $[[n-1,k+1,d-1]]$ code~\cite[Theorem 6]{CRSS98},
if we have a nondegenerate $[[28,14,6]]$ code, there is a $[[27,15,5]]$ code. However, the existence of a $[[27,15,5]]$ code has been excluded in Example~\ref{ex:27}.
 Thus there is no  $[[28,14,6]]$ code.
 \ee
 \be
Similarly, a $[[23,1,9]]$ stabilizer code, if it exists,  must have $\cS^{\perp}$ generated by type III generators
and the only possible weight distribution of $\cS$ is
\begin{align*}
&A_0=1, A_{10}=10626, A_{12}=78246, A_{14} =478170,  A_{16}=1245519, A_{18}= 1562022, A_{20}= 733194, A_{22}= 86526
\end{align*}
 and $A_w=0$, otherwise.
Unfortunately, we do not know how to exclude these possibilities. Some other techniques are needed.

We also found that several quantum codes, such as the parameters $[[14,3,5]]$, $[[17,3,6]]$, $[[17,6,5]]$, $[[19,5,6]]$, $[[21,7,6]]$, $[[23,3,8]]$ and so on,  must have
 $\cS^{\perp}$ with generators of type II, if they exist.
\ee

Next we consider maximal-entanglement EA stabilizer codes, where $\cS_{\text{I}}$ is trivial, in the following  example.
\be \label{ex:linear_programming}
Solving the integer program, we eliminate the existence of
$[[4,2,3;2]]$, $[[5,3,3;2]]$,  $[[5,2,4;3]]$, $[[6,3,4;3]]$,
$[[9,2,7;7]]$, $[[10,2,8;8]]$, $[[10,3,7;7]]$, $[[11,3,8;8]]$,
$[[14,2,11;12]]$, $[[14,3,10;11]]$,
$[[15,2,12;13]]$, $[[15,3,11;12]]$,
 $[[15,8,7;7]]$, $[[15,9,6;6]]$,$[[16,4,11;12]]$, $[[17,4,12;13]]$,$[[20,3,15;17]]$ EA stabilizer codes.
The constraints from Theorem~\ref{thm:new constraints} are effective in the case of maximal-entanglement EA stabilizer codes.
\ee
%
%

\subsection{Nonexistence of $[[5,3,3;2]]$, $[[4,2,3;2]]$, $[[5,2,4;3]]$, and $[[6,3,4;3]]$ EA stabilizer Codes}
\label{sec:nonexistence 5332}
A general method to prove or disprove the existence of a code of certain parameters is to use computer search over all possibilities.
It is applicable when the search space is small.
Here we consider a general form of the \emph{check matrix} of quantum codes and use computer search to rule out the existence of some small EA stabilizer codes.

The parity-check matrix  corresponding to a simplified stabilizer group $\cS=\langle g_1,\dots, g_{n-k},h_1,\dots,h_c\rangle $  is defined as
\[
H=\begin{bmatrix}
\tau(g_1)\\
\vdots\\
\tau(g_{n-k})\\
\tau(h_1)\\
\vdots\\
\tau(h_c)
\end{bmatrix}
\]
with  $\text{rank}\left(H\Lambda_{2n} H^T\right)=2c$ \cite{WB08},
where the superscript $T$ means transpose and $\Lambda_{2n}=\begin{bmatrix}0_{n\times n}&I_n\\I_n & 0_{n\times n} \end{bmatrix}$.
Recall that $\tau: {\cG}_{n}\rightarrow \mathbb{Z}_2^{2n}$ is defined in (\ref{eq:tau}).

Theorem 2 in Ref. \cite{LB12} states that a check matrix of a nondegenerate $[[n,k,d;c]]$ code
can be transformed into a standard form $H =  \left[\begin{array}{cc|cc}  I_s &A & D & 0\\ 0&C&I_s&B\\0&E&0&F\end{array}\right]$, where $I_s$ is the $s\times s$ identity matrix and $s\geq d-1$.
Since a maximal-entanglement EA stabilizer code is nondegenerate, we can  construct a check matrix in that form.
The check matrix of an $[[5,3,3;2]]$ code, if it exists, can be written as
\[
H=[H_X|H_Z]=\left[\begin{array}{ccccc|ccccc}
1&0&*&*&*&0&0&*&*&*\\
0&1&*&*&*&0&0&*&*&*\\
0&0&*&*&*&1&0&*&*&*\\
0&0&*&*&*&0&1&*&*&*\\
\end{array}\right],
\]
where a $``*"$ can be $0$ or $1$.
The error syndromes of single-qubit Pauli errors $X_i$ and $Z_j$  are the $i$-th column and $j$-th column of $H_X$ and $H_Z$, respectively.
If the syndromes of $X_i$ and $Z_i$ are $s^x_i,$ and $s^z_i$, respectively,
then the syndrome of $Y_i$ is $s^x_i+s^z_i$.
Our goal is to fill in the missing columns such that  $\text{rank}(H\Lambda H^{T})=4$ and each single-qubit Pauli error has a unique error syndrome, since the minimum distance is three~\cite{LL11}.

Let the integer number corresponding to a column vector $[a_0\ a_1\ a_2\ a_3]^T$ be $\sum_{i=0}^3 a_0 2^{3-i}$.
The columns 8, 4, 2, and 1 have appeared in the above standard form, and hence so have the columns $10$ and $5$ (the syndromes of $Y_1$ and $Y_2$).
The remaining candidates are $3,$ $6,$ $7,$ $9,$ $11,$ $12,$ $13,$ $14,$ and $15$.
We group these columns as follows:
\begin{align*}
\begin{array}{lll}
G_1: 3, 12, 15; &G_2: 3, 13, 14; &G_3: 6, 9, 15;\\
G_4: 6, 11, 13; &G_5: 7, 9, 14;  &G_6: 7, 11, 12.
\end{array}
\end{align*}
The three columns in any one of  the six groups
are candidates of the syndromes of $X_i,$ $Z_i,$ and $Y_i$ for a fixed $i$.
We further divide these six groups into two non-overlapping sets:
$S_1=\{G_1, G_4, G_5\}$ and $S_2=\{G_2,G_3,G_6\}$.
To fill in the missing columns of $H$, we first choose a set $S_j$,
and then choose two columns from each of the three groups in $S_j$.
Consequently, the total number of candidates  for $H$ is
\[2\times ({3\choose 2}\times 2!)^3=432.\]
We verified that none of them has $\mbox{rank}(H\Lambda H^T)=4$.
Hence there is no $[[5,3,3;2]]$ EA stabilizer code.

The same technique shows that there are no  $[[4,2,3;2]]$,  $[[5,2,4;3]]$, or $[[6,3,4;3]]$ EA stabilizer codes.

\subsection{Nonexistence of other EA stabilizer Codes}

A maximal-entanglement EA code can also be uniquely defined by a logical group $\cL$.
Like the check matrix, we can define a logical matrix $L$ corresponding to  $\cL$ with
 \[\mbox{rank}(L\Lambda_{2n} L^T)=2k.\]
Thus   maximal-entanglement EA codes are a special case of
classical additive quaternary code \cite{CRSS98,LB10}.
\bl \label{lemma:additive codes}
An upper bound on the minimum distance of a classical $(n,2^{2k})$ additive quaternary code
is an  upper bound on the minimum distance of an $[[n,k;n-k]]$ EA stabilizer code.
\el
A table of  upper bounds on the minimum distance of additive quaternary codes for length $n\leq 13$ is given in \cite{BlokhuisBrouwer04,BEFMP09}.
From that table, we learn that there are no
$[[11,4,7;7]]$, $[[12,5,7;7]]$, $[[12,7,5;5]]$, $[[13,6,7;7]]$, $[[13,7,6;6]]$, or $[[13,8,5;5]]$ EA stabilizer codes.
As pointed out in~\cite{LLGF15}, there is no $[[15,5,9;10]]$ EA stabilizer code since there is no $(15,2^{10},9)$ additive quaternary code~\cite{BBFMP13}.

\bl \label{lemma:check}
An upper bound  on the minimum distance of  a classical $(n,2^{2k})$ additive quaternary code
 is an upper bound on the minimum distance of an $(n+1,2^{2(k+1)})$ additive quaternary code,
and hence an upper bound on the minimum distance of an $[[n+1,k+1; n-k]]$ EA stabilizer code.
\el
Lemma \ref{lemma:check} is because that an $(n+1,2^{2(k+1)})$  code has the same number of parity checks as an $(n,2^{2k})$ code.
Hence there are no
$[[14,7,7;7]]$, $[[14,8,6;6]]$, or $[[14,9,5;5]]$ EA stabilizer codes.  

\subsection{Existence of other EA stabilizer Codes}
We say an $[[n,k,d;c]]$ EA stabilizer code is optimal if $d$ achieves any upper bound on the minimum distance for fixed $n,k,c$.
By constructing check matrices as in the proof of nonexistence of $[[5,2,4;3]]$ EA stabilizer codes in Subsec. \ref{sec:nonexistence 5332},
we found optimal EA stabilizer codes with the following parameters:
$[[10,5,5;5]]$, $[[11,7,4;4]]$,  $[[15,12,3;3]]$, and $[[15,11,4;4]]$.

By the EA stabilizer code construction in \cite{BDM06}, the $[9,4,5]$ linear quaternary code can be transformed into a $[[9,4,5;5]]$ EA stabilizer code, which is optimal.
In the same way, we obtain the following EA stabilizer codes: $[[9,5,4;4]]$, $[[10,4,6;6]]$, $[[11,6,5;5]]$,  $[[11,5,6;6]]$,
$[[11,4,6;7]]$, $[[12,8,4;4]]$,  $[[12,5,6;7]]$, $[[12,2,9;10]]$, $[[13,2,10;11]]$, $[[13,6,6;7]]$, $[[14,7,6;7]]$,
and $[[14,8,5;6]]$.
A $[[14,6,6;8]]$ follows naturally.
By deleting the fourth column and the $18$-th column of the check matrix of the $[[14,8,5;6]]$ code, we obtain a $[[13,7,5;6]]$ EA stabilizer code.
From the $[12,6,6]$ linear quaternary code, we can obtain a $[[12,5,6;5]]$ EA stabilizer code.
By adding two more ebits, we obtain a $[[12,5,6;7]]$ EA stabilizer code.

Using the circulant construction in \cite{LB10}, we found an optimal $[[11,3,7;8]]$ EA stabilizer code.
We also obtained a $[[15,5,8;10]]$ and a $[[15,6,7;9]]$ EA stabilizer code,
which improves the lower bounds on minimum distance for fixed $n,k,c$.


Finally, several optimal maximal-entanglement EA stabilizer codes have been constructed from classical quaternary \emph{zero-radical} codes in~\cite{LLGF15}.

The existence of these codes demonstrate that the corresponding linear programming bounds are tight.

\subsection{Table of Lower and Upper Bounds on the Minimum Distance of Maximal-Entanglement EA stabilizer
Codes}
Combining the results in the previous subsections, we improved the table of lower and upper bounds on the
minimum distance of maximal-entanglement EA stabilizer codes for $n\leq 20$ (\cite{LBW13,LLGF15}) in Table \ref{tb:Bounds}.
\begin{table}
  \centering
  \begin{tabular}{|c|c|c|c|c|c|c|c|c|c|}
\hline
      $n\backslash k$&    $1$& $2$ & $3$&4&5&6&7&8&9\\
      \hline
      3 &  3&   2&   &      &       &&&&\\
      4 &  3&   2&  1&      &       &&&& \\
      5 &  5&   3&  2&      2&      &       &&&\\
      6 &  5&   4&  3&      2&      1&      &&&\\
      7 &  7&   5&  4&      3&      2&      2&&&\\
      8 &  7&   6&  5&      4&      3&      2&      1&      &\\
      9 &  9&   6&  6&      5&      4&      3&      2&      2&\\
      10&  9&   7&  6&      6&      5&      4&      3&      2&1\\
      11& 11&   8&  7&      6&      6&      5&      4&      3&2\\
      12& 11&   9&  8&      7&      6&      5-6&    {4}&      4&3\\
      13& 13&   10& 9&      8&      7&      6&      5&      4&4\\
      14& 13&   10& 9&      8&      7-8&    6-7&    6&      5&4\\
      15& 15&   11& 10&     9-10&   8&      7-8&    6-7&    6&5\\
      16& 15&   12& 11&     10&     9&      8&      7-8&    6-7&6\\
      17& 17&   13& 12&     11&     9-10&   8-9&    7-8&    7-8&6-7\\
      18& 17&   14& 13&     11-12&  10-11&  9-10&   8-9&    8-9&7-8\\
      19& 19&   14& 13-14&  12-13&  11&     10-11&  8-9&    8-9&8\\
      20& 19&   15& 14&     13&     12&     11-12&  9-10&   8-10&8-9\\
\hline
%
%
%
\hline
      $n\backslash k$& 10& $11$& $12$&13&14&15&16&17&18 \\
      \hline
      11 &  2&       &       &      &       &   & &&\\
      12 &  2&      1&      &       &       &   & &&\\
      13 &  3&      2&      2&      &       &   & &&\\
      14 &  4&      3&      2&      1&      &   & &&\\
      15 &  4&      4&      3&      2&      2&  &   &   &   \\
      16&   5&      4&      4&      3&      2&  1&  &   &   \\
      17&   6&      5&      4&      3-4&    3&  2&  2&  &   \\
      18&   6-7&    5-6&    5&      4&      3&  3&  2&  1&  \\
      19&   7&      6-7&    5-6&    5&      4&  3&  3&  2&  2\\
      20&   7-8&    6-7&    6-7&    5-6&    5&  4&  3&  2&  2 \\
\hline
 \end{tabular}
   \caption{Upper and lower bounds on the minimum distance of any $[[n,k,d;n-k]]$
   maximal-entanglement EA stabilizer codes.
   } \label{tb:Bounds}



\end{table}

\section{Discussion}

We have discussed general EA quantum codes, including nonadditive codes, and
proposed their algebraic linear programming bounds, including Singleton-type, Hamming-type, and the first-linear-programming-type  bounds.
The degenerate and nondegenerate bounds differ for some $\delta$ when  degeneracy exists.
It is known that degenerate EA stabilizer codes can violate the (nondegenerate) Hamming bound. Can we construct such a family of EA stabilizer codes with $R>0$?
Another interesting question is: are there families of degenerate $[[n,k,d;c]]$ EA quantum codes with $R>0$, $\rho>0$ and $1/3< \delta<0.75$  that violate the conventional first linear programming bound?
We can also consider the case of imperfect ebits, which should be similar to the study in~\cite{ALB16}.
Finally, these results could be strengthened in the case of \emph{linear} EA stabilizer codes~\cite{GL13_PhysRevA.87.032309,LLG15}.

We provided a refined Singleton bound for EA quantum codes that works for $d> (n+2)/2$; however, it is not monotonic and may not fully characterize the case of large $c$.
A better Singleton bound for EA stabilizer codes with $d> (n+2)/2$ remains open.

In the setting of usual EA quantum codes, it is assumed that Bob's qubits are error-free. Thus we chose in Theorem \ref{thm:gen_upper_bound} a polynomial of the form $$f(x,y)=F(x)\prod_{j=1}^c (y-j),$$
where $F(x)$ is a polynomial that is used to derive a certain upper bound for general stabilizer codes.
The case that Bob's qubits are imperfect~\cite{LB12} can be developed similarly to the split bounds for data-syndrome codes~\cite{ALB16},
where two types of errors are considered on two disjoint sets of locations.

The linear programming bounds for small quantum codes are improved. The additional constraints in Theorem \ref{thm:new constraints} are
especially effective for maximal-entanglement EA stabilizer codes, where the generators of a symplectic subgroup or a logical group are of type I or type III.
However, they are not that helpful in the case of standard stabilizer codes, where the normalizer groups often have weight distributions of type II. 
It is possible that the linear programming bounds for standard stabilizer codes can be improved for $n\geq 30$.

The EA stabilizer code table in \cite{LBW13} has been significantly improved.
Most of the check matrices of the EA stabilizer codes constructed in this article
are omitted because of limited space.
All of the gaps between the lower bound and upper bound in Table \ref{tb:Bounds} are now closed
for  $d\leq 5$ or $n\leq 8$.
The grouping techniques used in \cite{BEFMP09} may be generalized
to find upper bounds on classical additive quaternary codes for $n=15$ to $20$, which can be used as bounds on maximal-entanglement EA stabilizer codes by Lemma \ref{lemma:additive codes}.

Other types of split weight enumerators can be introduced into the linear programm for EA stabilizer codes. However, they usually induce too many variables so that the integer program is untraceable when $n$ becomes large.
The refined weight enumerator (\ref{eq:refined}) has already introduced too many variables for large $n$.

The method used in Subsec. \ref{sec:nonexistence 5332}
is difficult to apply for larger codes because the computational complexity grows exponentially.
Perhaps we can construct standard stabilizer codes using this grouping method together with Theorem 2 in~\cite{LB12}.

The approach here can be applied to other type of quantum codes, for example, the data-syndrome quantum codes \cite{ALB14,ALB16,Fuji14},
which are codes in the space $\mathbb{F}_4^n\times \mathbb{F}_2^m$.
Split  weight enumerators for these codes can be derived easily.
We can also apply these techniques to other asymmetric quantum codes.


\section*{Acknowledgment}
We thank Todd Brun for helpful discussion. We are grateful to Markus Grassl for his comments and suggestions that help to improve this paper.

\end{document}